    \crefname{figure}{Figure}{figures}
\newtheorem{theorem}{Theorem}
\newtheorem{proposition}[theorem]{Proposition}
\newtheorem{lemma}[theorem]{Lemma}
\newtheorem{corollary}[theorem]{Corollary}
\theoremstyle{definition}
\newtheorem{definition}[theorem]{Definition}
\theoremstyle{remark}
\newtheorem{remark}[theorem]{Remark}
\newtheorem{example}[theorem]{Example}
\def\Cl{\mathbb{C}}
\def\fin{\mathrm{fin}}
\def\P{\mathbb{P}}
\def\spann{\mathrm{span}}
\def\P{\mathcal{P}}
\def\M{\mathcal{M}}
\def\G{\mathfrak{G}}
\def\conv{\mathrm{Conv}}
\def\mS{\mathcal{S}}
\def\M{\mathcal{M}}
\def\mD{\mathcal{D}}
\def\E{\mathcal{E}}
\def\A{\mathsf{A}}
\def\B{\mathsf{B}}
\def\C{\mathsf{C}}
\def\D{\mathsf{D}}
\def\Q{\mathsf{Q}}
\def\R{\mathbb{R}}
\def\I{\mathbb{I}}
\def\N{\mathbb{N}}
\def\Nl{\mathbb{N}}
\def\Id{\mathbb{I}}
\def\mI{\mathcal{I}}
\def\mJ{\mathcal{J}}
\def\mK{\mathcal{K}}
\newcommand{\id}{\mathrm{id}}
\newcommand{\contord}{\succeq}
\newcommand{\newterm}[1]{\textbf{#1}}
\newcommand{\ph}{\mathord{\rule[-0.05em]{0.6em}{0.05em}}}		
\newcommand{\excess}{\varepsilon}
\newcommand{\qmarks}[1]{``#1''}
\newcommand{\tildetensor}{\mathbin{\tilde{\otimes}}}
\newcommand{\pomyield}{\mathfrak{p}_n{\textrm{-}}\mathrm{yield}}
	\let\abs\relax
	\let\norm\relax
	\DeclarePairedDelimiter{\abs}{\lvert}{\rvert}
	\DeclarePairedDelimiter{\norm}{\lVert}{\rVert}
	\DeclarePairedDelimiterXPP{\pnorm}[2]{}{\lVert}{\rVert}{_{#1}}{#2}
		\let\oldabs\abs
		\def\abs{\@ifstar{\oldabs}{\oldabs*}}
		\let\oldnorm\norm
		\def\norm{\@ifstar{\oldnorm}{\oldnorm*}}
		\let\oldpnorm\pnorm
		\def\pnorm{\@ifstar{\oldpnorm}{\oldpnorm*}}
\providecommand{\given}{}		
\newcommand{\SetSymbol}[1][]{%
	\nonscript\;\,#1\vert
	\allowbreak
	\nonscript\;\,
	\mathopen{}
}
\let\Set\relax
\DeclarePairedDelimiterX{\Set}[1]{\{}{\}}{%
	\renewcommand{\given}{\SetSymbol[\delimsize]}
	#1
}
\let\oldSet\Set
\def\Set{\@ifstar{\oldSet}{\oldSet*}}
\let\Family\relax
\DeclarePairedDelimiterX{\Family}[1]{(}{)}{%
	\renewcommand{\given}{\SetSymbol[\delimsize]} 
	#1
}
\let\oldFamily\Family
\def\Family{\@ifstar{\oldFamily}{\oldFamily*}}
    \patchcmd\NAT@citexnum{\let\NAT@last@num\NAT@num}{\MakeLinkTarget[cite]{}\Hy@backout{\@citeb\@extra@b@citeb}\let\NAT@last@num\NAT@num}{}{\fail}
\title{Resource-Theoretic Hierarchy of Contextuality for General Probabilistic Theories}
\author[1]{Lorenzo Catani} \email{lorenzo.catani4@gmail.com}
\author[2,3]{Thomas D. Galley} \email{thomas.galley@oeaw.ac.at}
\author[4]{Tom{\'a}{\v{s}} Gonda} \email{tomas.gonda@uibk.ac.at}
\affil[1]{International Iberian Nanotechnology Laboratory, Av. Mestre José Veiga s/n, 4715-330 Braga, Portugal}
\affil[2]{Institute for Quantum Optics and Quantum Information,
Austrian Academy of Sciences, Boltzmanngasse 3, A-1090 Vienna, Austria}
\affil[3]{Vienna Center for Quantum Science and Technology (VCQ), Faculty of Physics, University of Vienna, Vienna, Austria}
\affil[4]{Institute for Theoretical Physics, University of Innsbruck, Austria}
\begin{document}
\date{}

\maketitle

\begin{abstract} 
    
    In this work we present a hierarchy of generalized contextuality. 
    It refines the traditional binary distinction between contextual and noncontextual theories, and facilitates their comparison based on how contextual they are. 
    Our approach focuses on the contextuality of prepare-and-measure scenarios, 
    described by general probabilistic theories (GPTs).
    To motivate the hierarchy, we define it as the resource ordering of a novel resource theory of GPT-contextuality.
    The building blocks of its free operations are classical systems and univalent simulations between GPTs.
    These simulations preserve operational equivalences and thus cannot generate contextuality.
    Noncontextual theories can be recovered as least elements in the hierarchy. 
    We then define a new contextuality monotone, called classical excess, given by the minimal error of embedding a GPT within an infinite classical system. 
    In addition, we show that the optimal success probability in the parity oblivious multiplexing game also defines a monotone in our resource theory.
    Finally, we discuss whether the non-free operations can be understood as implementing information erasure and thus explaining the fine-tuning aspect of contextuality.
\end{abstract}

\newpage

\section{Introduction}
    
    \paragraph{Generalized (non)contextuality.} 
    A crucial research question in the foundations of quantum theory is to identify those features of quantum theory that constitute a true departure from the classical worldview. 
    Addressing this question requires one to first establish a good notion of classicality, which adequately captures the classical worldview. 
    We believe that a good notion of classicality should satisfy the following desiderata (see also \cite{SchmidSolstice2022}): (1) it endorses a principle that defines a clear boundary between aspects that pose interpretational issues and those that do not, (2) it is of broad range of applicability, (3) it is empirically testable, and (4) its violation constitutes a resource for practical applications, in particular in quantum information processing.\footnote{This last desideratum is motivated by the belief that identifying the true nonclassicality of quantum theory will ultimately provide the answer to the question about the origin of the alleged quantum computational speed-up.} 
    Motivated by these desiderata, a leading notion of classicality is \emph{generalized noncontextuality} \cite{Spekkens2005}.\footnote{In what follows we will often omit ``generalized'' and just talk of ``(non)contextuality''.} 
    A noncontextual theory is one that is compatible with a classical realist explanation of its operational predictions{\,---\,} it admits of a \emph{noncontextual ontological model}.
    In such a model, any two experimental procedures that the theory predicts to be operationally indistinguishable also have the same ontological representation (see \Cref{sec:prep_cont} for a precise definition).
    
    Noncontextuality satisfies (1), in that it is an instance of a methodological principle inspired by Leibniz's principle of the identity of indiscernibles \cite{Spekkens2019} (also formulated as a no fine-tuning principle \cite{CataniLeifer2020}). 
    A violation of such principle would indeed entail an interpretational problem, as it would attribute a conspiratorial connotation to the realist explanation of the theory: why should experimental procedures predicted by the theory to be indistinguishable \textit{in principle} be represented by different distributions in the ontological model?
    Noncontextuality also satisfies (2), as it applies to a wide range of scenarios including prepare-and-measure experiments of a single system, unlike Bell's local causality{\,---\,}another leading notion of classicality. 
    Moreover, in situations where these are applicable, it coincides with notions of classicality such as non-negativity of quasi-probability representations \cite{Ferrie2008,Spekkens2008} and Bell's local causality \cite{Bell1964}.
    It satisfies (3), as witnessed by the experiments performed to test quantum violations of generalized noncontextuality \cite{Mazurek2016,Mazurek2021}.
    Finally, desideratum (4) has been argued by the numerous works showing that contextuality is a resource for information processing tasks \cite{Spekkens2009,Haamedi2017,Schmid2018,SahaAnubhav2019,LostaglioSenno2020,lostaglio2020certifying,Yadavalli2020,Flatt2021,Roch2021,Catani2022UR,Wagner2022,Catani2023WP}.

    \paragraph{Hierarchy of contextuality.} 
    In relation to the last point, contextuality can be used to witness and characterize the advantage of using \emph{quantum} physical systems in practical applications.
    In such contexts, it is important to know not just whether a theory is contextual, but also to \textit{quantify} how contextual it is.
    Our article aims to address this by introducing a \emph{hierarchy of contextuality}, in which not all contextual phenomena are equivalent.
    In particular, it allows us to make more fine-grained distinctions and to quantify the amount of contextuality present in a theory.
    
    To motivate the hierarchy we propose, we take inspiration from the framework of resource theories \cite{Coecke2016,chitambar2019quantum,gour2024resources}. 
    There, a resource object $r$ (such as a quantum state) is deemed more valuable than a resource object $s$, denoted $s \preceq r$, if there is a \emph{free operation} that transforms $r$ to $s$.
    The preorder relation\footnotemark{} $\preceq$ is called the \emph{resource ordering}.
    \footnotetext{A preorder is a relation that is both reflexive and transitive. 
    It is a partial order if it is also anti-symmetric.
    If, in addition, any two elements are related, then it is a total order.
    }%
    A paradigmatic example is that of entanglement ordering of bipartite quantum states, where the free channels consist of local operations and classical communication~\cite{Horodecki2009}.
    In this sense, a resource theory (and its associated resource ordering) is defined by a choice of resource objects and of free operations~\cite{gonda2021resource}.
    For instance, if the resource objects studied are quantum states, then the set of free operations is typically a subset of quantum channels closed under composition.
    
    In our case, resource objects are instead physical theories, since (non)contextuality is a property of theories rather than states.
    However, as contextuality can be witnessed by individual systems within the theory, we restrict our investigation to individual systems.
    That is, we identify each resource object as a system in a general probabilistic theory (GPT)~\cite{Hardy2001,Barrett2007,Janotta2014,plavala2021general}, such as classical probability theory, quantum theory, or a subtheory thereof like the stabilizer subtheory \cite{Gottesman1997}.
    The GPT system specifies all possible probabilistic behaviours of this physical system and thus characterizes its information-theoretic properties.
    
    Since our resources are not states of a physical system, but rather physical theories, this implies that the transformations that we consider cannot be standard physical operations, as is the case for traditional resource theories of quantum states. 
    Our operations of interest are \emph{simulations}, which are transformations between theories that faithfully encode the information of one theory within another.
    Simulations that preserve indistinguishability have been introduced in \cite{MullerGarner2021} as \emph{univalent} simulations.
    Since noncontextuality of an ontological model means that the model corresponds to a univalent simulation, one can show \cite{MullerGarner2021} that a GPT system is noncontextual if and only if it can be simulated in a univalent way by a classical GPT system{\,---\,}one, whose states form a simplex of probability distributions on a finite sample space.
    Classical GPT systems thus cannot generate contextuality under univalent simulations.
    Therefore, we propose the following hierarchy of (generalized) contextuality: 
    \begin{quote}
        A GPT system $\B$ is said to be \textbf{at least as contextual} as a GPT system $\A$ if there exists a univalent simulation of $\A$ by a composite of $\B$ and a classical GPT system.
    \end{quote}
    It is the resource ordering of a resource theory with objects given by GPT systems and free operations given by univalent simulations with free access to classical systems. 
    In \cref{sec:contextualityerasure}, we discuss a possible interpretation of (non-univalent) simulations as expressing a particular kind of information erasure.

    \paragraph{Contextuality monotones.}
    As is common in resource theories, the hierarchy of contextuality is not a total order{\,---\,}there are GPT systems such that neither is at least as contextual as the other one.
    It is not even a partial order, because there exist distinct GPT systems which are equivalent.
    For example, this is the case for all noncontextual systems, such as the classical bit and the classical trit.
    Therefore, the hierarchy is given by a preorder and it cannot be fully represented by a single numerical value{\,---\,}the ``amount of contextuality"{\,---\,}assigned to each GPT system.
    However, in order to capture certain aspects of the hierarchy it is useful to define quantities which are order-preserving assignments of a number to each resource object.
    These are called \textit{resource monotones}. 
    
    We define a new contextuality monotone that we call \emph{classical excess}. 
    It expresses the minimal error of a univalent simulation of a given GPT system by any classical system. 
    In addition, we show that the optimal success probability for the parity-oblivious-multiplexing (POM) protocol \cite{Spekkens2008} with free classical systems is a monotone.

    \paragraph{Previous works on the matter.}
    Resource-theoretic perspective on contextuality has been developed in several works in the past.
    Most of these focus on the Kochen--Specker notion of contextuality \cite{KochenSpecker}. Despite being related to the generalized notion of contextuality (Kochen--Specker noncontextuality is the conjunction of measurement noncontextuality and outcome determinism for sharp measurements), Kochen--Specker contextuality favors certain frameworks that, instead, are not appropriate for developing a resource theory of generalized noncontextuality. 
    The work of Abramsky, Barbosa, and Mansfield \cite{abramsky2017contextual} uses a framework whose main objects are empirical models{\,---\,}tables of data, specifying probability distributions over the joint outcomes of sets of compatible measurements.  This framework is further developed in \cite{barbosa2023closing} and \cite{Karvonen2021} and is based on the sheaf theoretic approach to contextuality introduced in \cite{AB}. 
    Existing quantifications of Kochen--Specker contextuality are based on the memory cost \cite{Kleinmann2011}, the ratio of contextual assignments \cite{Svozil2012}, the relative entropy and contextual cost \cite{grudka2014quantifying}, the contextual robustness \cite{Li2020}, the contextual fraction~\cite{abramsky2017contextual}, and the rank of contextuality \cite{Horodecki2023}.
    In \cite{amaral2019resource} a review of several of the previous approaches towards a resource theory of Kochen--Specker contextuality is presented. 
    
    The first work on a resource theory of generalized contextuality in prepare-and-measure scenarios was presented by Duarte and Amaral in \cite{Duarte2018}. 
    They use the generalized-noncontextual polytope characterizing the contextual set of prepared-and-measured statistics defined in \cite{SchmidAll2018} to motivate the set of free operations and then define monotones based on known resource quantifiers for contextuality and nonlocality.
    As an application of such a resource theory, \cite{Wagner2023} uses it to simplify and robustify proofs of contextuality.

    \paragraph{On the use of general probabilistic theories.}

    Unlike Duarte and Amaral, we use the framework of GPTs. 
    A GPT system consists of a collection of states, effects and a probability assigned to each pair of a state and an effect.
    While noncontextuality is traditionally phrased in terms of operational theories, it can be extended to GPTs \cite{plavala2022incompatibility} and characterized by simplex embeddability \cite{Schmid2019}. 
    This condition states that a GPT system is noncontextual if and only if it can be embedded within a classical GPT system and thus characterizes the qualitative divide between contextual and noncontextual GPT systems.
    Our hierarchy of contextuality can be seen as a refinement thereof.
    For this purpose, we use univalent simulations between GPT systems, which are generalizations of noncontextual ontological models~\cite{MullerGarner2021}.

    \paragraph{Structure of the paper.} 
    In \cref{sec:prep_cont}, we recall the standard treatment of generalized noncontextuality in the framework of operational theories and ontological models. 
    In order to connect it to the language of GPTs (\cref{sec:GPT_systems}), we introduce the notion of an operational theory associated to a GPT system in \cref{sec:op_th_for_gpt}.Readers familiar with ontological models and GPTs may choose to skip directly to \cref{sec:GPT_simulations,sec:excess}, where we present (univalent) simulations of GPT systems and the excess measure, respectively.
    We then introduce our hierarchy of contextuality in \cref{sec:cont_preorder} and discuss its behaviour for composite systems in \cref{sec:contextuality_composite}.
    We prove that the classical excess is a resource monotone (\cref{sec:cont_monotone}) and construct a monotone from the parity-oblivious-multiplexing protocol in \cref{sec:POM}.
    In \cref{sec:contextualityerasure}, we discuss a possible interpretation of the non-free operations as involving information erasure, while in \cref{sec:previousworks} we discuss the relation to previous works on contextuality and GPTs, namely \cite{MullerGarner2021,Duarte2018,Schmid2019,GittonWoods2020,GittonWoods2022,shahandeh2021contextuality,selby2023accessible,Selby2024}.
    We conclude with a summary of the results and an outline of possible future research directions in \cref{sec:conclusion}.

\section{Operational theories, ontological models and  contextuality}\label{sec:prep_cont}

In this work we consider prepare-and-measure scenarios associated with a single system. An \textit{operational theory} associated with a prepare-and-measure scenario is defined by a list of possible preparations, measurements and the probabilities $\mathcal{P}(k|P,M)$ of obtaining the outcome $k$ of the measurement $M$ given that the system is prepared in the preparation $P$. 
An \textit{ontological model} of an operational theory provides a realist explanation of the operational predictions of the theory~\cite{Harrigan}. It does so by stipulating the existence of an \textit{ontic state space} for each given system, denoted with $\Lambda$, which is mathematically represented by a (standard Borel) measurable space. 
Each point $\lambda \in \Lambda$ represents an \textit{ontic state} that describes all the physical properties of the system. 
The ontological model associates each preparation $P$ in the operational theory with a conditional probability distribution $\mu_P(\lambda)\equiv \mu(\lambda|P) $ over ontic states. We call these \textit{epistemic states} as they represent states of knowledge about the underlying ontic states.  Each outcome $k$ given a measurement $M$, denoted by $[k|M]$, is associated with a conditional probability distribution $\xi(k|\lambda,M)$. The latter corresponds to the probability of obtaining outcome $k$ given that measurement $M$ is implemented on a system in the ontic state $\lambda$. 
An ontological model of an operational theory reproduces the predictions of the theory via the classical law of total probability, 
\begin{align} \label{opdata}
	\mathcal{P}(k|P,M)
	&=\sum_{\lambda\in\Lambda} \xi(k|\lambda,M)\mu(\lambda |P).
\end{align} 

An ontological model is preparation noncontextual if operationally equivalent preparation procedures are represented by identical probability distributions in the ontological model~\cite{Spekkens2005}. More formally, two preparation procedures $P$ and $P'$ are operationally equivalent if they provide the same operational statistics for all possible measurements, \textit{i.e.}, $\forall M: \mathcal{P}(k|P,M)= \mathcal{P}(k|P',M)$. In this case, we write $P\simeq P'$. An ontological model is preparation noncontextual if any two such preparations are represented by the same epistemic states:
\begin{equation}
	P\simeq P' \implies 
	\mu_P= \mu_{P'}.
\end{equation}

Similarly, two measurement outcomes $[k|M]$ and $[k'|M']$ are operationally equivalent if they give the same statistics for all possible preparations: $\P(k|M,P) = \P(k'|M',P)$ for all preparations $P$.  In this case, we write $[k|M]\simeq [k'|M']$. An ontological model is measurement noncontextual if any two such measurement outcomes are represented by the same response functions:
\begin{equation}
	[k|M]\simeq [k'|M'] \implies 
	\xi_{[k|M]}= \xi_{[k'|M']}.
\end{equation}

An operational theory is termed \textit{preparation noncontextual} (resp. \emph{measurement noncontextual}) if there exists a preparation noncontextual (resp. measurement noncontextual) ontological model for the theory, while it is termed  \textit{preparation contextual} (resp. \emph{measurement contextual}) if it does not admit of a preparation noncontextual  (resp. measurement noncontextual) ontological model. It was first proven in \cite{Spekkens2005} that quantum theory is preparation contextual and measurement noncontextual when outcome determinism is not assumed.  

We note that the notion of generalized contextuality can be also defined for transformations \cite{Spekkens2005}. 
In this work we are only concerned with prepare-and-measure scenarios and therefore do not consider transformation contextuality. 

\section{Contextuality for general probabilistic theories}
\label{sec:GPT_contextuality}
 
    One can often prove contextuality of a theory by studying statistical behaviours of an individual system $\A$ (such as a qubit) in prepare-and-measure scenarios.
    The information that suffices to describe the possible behaviours constitutes of
    \begin{itemize}
        \item a set $\Omega$ of possible states that $\A$ can be prepared in, according to the theory in question,
        
        \item a set of possible measurements that can be applied to $\A$, each consisting of a collection of effects associated with the measurement outcomes, and
        
        \item for every state $\omega \in \Omega$ and every effect $e$ the probability of obtaining $e$ when measurement $M$ is applied to system $\A$ prepared in state $\omega$.

    \end{itemize}
    This information is commonly expressed as a system in a general probabilistic theory{\,---\,}a GPT system.
    
In the following we omit any mention of dynamics that the system may undergo.
     This simplification has no consequence on our discussion of preparation and measurement contextuality.
    
    \subsection{GPT systems}\label{sec:GPT_systems}
    
    Let us now discuss the mathematical description of GPT systems~\cite{Mackey_mathematical_1963,Ludwig_versuch_1964,davies_operational_1970,ludwig_axiomatic_1985,lami2018nonclassical,Beneduci_2022} and the definition of contextuality in this context.

    In the following definition, $V_{\A}$ is a normed vector space and $V_{\A}^*$ is the topological dual of $V_{\A}$. 
    The canonical pairing function is denoted by $  \ph \cdot \ph : V_{\A}^* \times V_{\A} \to \mathbb{R}$, i.e.\ $e \cdot \omega \coloneqq e(\omega)$.

       \begin{definition}\label{def:GPT_system}
        A \newterm{GPT system} $\A$ is specified by two non-empty convex sets
        \begin{equation}
            \Omega_{\A} \subseteq V_{\A},  \qquad E_{\A} \subseteq V_{\A}^*
        \end{equation}
        that are called the \newterm{state space} and \newterm{effect space}, respectively, while their elements are states and effects of $\A$.
        We require that $E_\A$ contains the \newterm{null effect} $0_\A$ and the \newterm{unit effect} $1_\A$ satisfying
        \begin{equation}\label{eq:zero_unit_effects}
            0_\A \cdot v = 0 \quad \forall \, v \in V_\A  \qquad \text{and} \qquad  1_\A \cdot \omega = 1 \quad \forall \, \omega \in \Omega_\A
        \end{equation}
        respectively.
        In order for the pairing to have a probabilistic interpretation, we also require $e \cdot \omega \in [0,1]$ for all effects $e$ and all states $\omega$.
        Finally, to avoid complications arising from degenerate systems, we require that  
        \begin{equation}
            V_\A = \mathrm{span}(\Omega_\A)  \qquad \text{and} \qquad V_\A^* = \mathrm{span}(E_\A)
        \end{equation}
        hold, so that states can be distinguished by effects:
        \begin{equation}
            \label{eq:non-degenerate}
            e \cdot \omega = e \cdot \omega' \quad \forall \, e \in E_A  \qquad \implies \qquad  \omega = \omega',
        \end{equation}
        and similarly that effects can be distinguished by states: 
        \begin{equation}
            \label{eq:non-degenerate_2}
            e \cdot \omega = e' \cdot \omega \quad \forall \, \omega \in \Omega_A  \qquad \implies \qquad  e = e'.
        \end{equation}
    \end{definition}

    The set of all GPT systems is denoted by $\G$ and the subset of finite-dimensional ones by $\G_\fin$.

    A GPT system encodes all the information about the statistical behaviour of a physical system in prepare-and-measure scenarios.
    Let us give a few examples of GPT systems.

    \begin{example}[Finite-dimensional quantum system as a GPT system]\label{ex:quantum_GPT}
        To each Hilbert space $\Cl^n$ with $n \in \Nl$, one can associate a quantum GPT system $\Q_n$.
        The real vector space $V_{\Q_n} \simeq \R^{n^2}$ is the space of Hermitian operators on the underlying Hilbert space.
        The state space is the set of density operators and the effect space is the set of quantum effects, which are positive semi-definite operators $e$ that satisfy
        \begin{equation}
            0 \leq e \leq 1,
        \end{equation}
        where $1$ is the identity operator.
        The pairing is nothing but the Hilbert--Schmidt inner product:
        \begin{equation}
            e \cdot \omega = \tr (e^\dagger \omega).
        \end{equation}
    \end{example}

    Likewise, every classical probabilistic system is a GPT system~\cite[\S 4]{davies_operational_1970}.

    \begin{example}[Finite classical GPT systems]\label{ex:classical_GPT}
    We denote the finite $n$-level classical system (as well as its state space) by $\Delta_n$, with underlying vector space $\mathbb{R}^n$ and state space given by the simplex
        \begin{equation}
            \Delta_n \coloneqq \conv \Set*[\big]{ \delta_i  \given  i \in \{1, \ldots, n\} },
        \end{equation}
        where $\{\delta_i\}_i$ is a chosen orthonormal basis of $\mathbb{R}^n$ and $\conv$ denotes the convex hull operation.
        The effect space consists of all the linear functionals $\xi$ that satisfy 
        \begin{equation}
            \xi \cdot \mu \in [0,1]
        \end{equation}
        for every distribution $\mu \in \Delta_n$.
        If we think of $\mu$ as a column vector, then the possible effects are all the associated row vectors with entries in the unit interval $[0,1]$.
        In other words, the effect space is 
        \begin{equation}
            \Delta^*_n \coloneqq \conv \Set*[\big]{ \upvarrho_\alpha  \given  \alpha \in \{0,1\}^n },
        \end{equation}
        where
        \begin{equation}
            \upvarrho_\alpha \coloneqq \sum_{i=1}^n \alpha_i \delta^*_i
        \end{equation}
        and $\{\delta^*_i\}_i$ is the dual basis to the one above.
    \end{example}
    
    Besides finite classical systems, we will also need the notion of a countably infinite one.

    \begin{example}[Countable classical GPT system]
        \label{ex:integer_simplex}
       The countable classical system $\Delta_{\N}$ has associated vector space  $\ell^1$, the Banach space of sequences whose series are absolutely convergent.
       The state space, also denoted by $\Delta_{\N}$, consists of all probability measures on $\N$.
       The effect space $\Delta_\N^*$ consists of arbitrary sequences in $[0,1]^\N$ and forms a convex subset of $\ell^\infty$, the space of bounded sequences, which is the dual of $\ell^1$. 
    \end{example}

    Besides individual systems, we will occasionally also need to refer to composite ones.
In general, the composite of two GPT systems is not unique{\,---\,}see \cite[Section 5]{plavala2021general} for an in-depth discussion of tensor products of GPT systems.
    However, among all the possible choices, there is a `minimal composite' of two GPT systems $\A$ and $\B$, whose state and effect space merely contain the separable ones. 
    Any meaningful composite of two GPT systems necessarily contains their minimal composite as a subsystem.

    \begin{definition}\label{def:minimal_tensor}
        Given two GPT systems $\A = (\Omega_\A, E_\A, V_\A)$ and $\B = (\Omega_\B, E_\B,V_\B)$ 
 we define the \newterm{minimal composite} system $\A \otimes \B$ to be given by the separable states and separable effects on $\A$ and $\B$. Formally  $\A \otimes \B\coloneqq (\Omega_{\A \otimes \B} , E_{\A \otimes \B}, V_{\A \otimes \B})$ where,
        \begin{align}
            V_{\A \otimes \B} &\coloneqq V_\A \otimes V_\B, \\
            \Omega_{\A \otimes \B} &\coloneqq  \conv \Set*[\big]{ \omega_\A \otimes \omega_\B  \given  \omega_\A \in \Omega_{\A} \text{ and } \omega_\B \in \Omega_{\B} },\\
            E_{\A \otimes \B} &\coloneqq  \conv \Set*[\big]{ e_\A \otimes e_\B  \given  e_\A \in E_{\A} \text{ and } e_\B \in E_{\B} },
        \end{align}
    \end{definition}

   Of particular interest is the tensor product of an arbitrary GPT system and a classical GPT system, in which case there is a unique choice of a sensible tensor product corresponding to the minimal composite of GPT systems (see~\cref{sec:contextuality_composite} for further discussion about this point). 

\subsection{Operational theory associated to a GPT system}\label{sec:op_th_for_gpt}

    As we articulate in \cref{sec:prep_cont}, proofs of (preparation) contextuality involve \emph{distinct} preparation procedures, which are nevertheless operationally \emph{equivalent}.
    Since any two distinct states of a GPT system are operationally \emph{inequivalent}, we cannot identify GPT states with preparation procedures.
    Instead, we think of a preparation procedure as an ensemble of GPT states, each occurring with a specified probability.
    Similarly, proofs of measurement contextuality involve \emph{distinct} measurement outcomes, which are operationally \emph{equivalent}.
    Since any two distinct effects of a GPT system are operationally \emph{inequivalent}, we cannot identify GPT effects with measurement outcomes.
    Instead, we think of a measurement procedure as a collection of GPT effects, each of which is a possible outcome of the given measurement.
    See \cite[Section 3]{plavala2022incompatibility} for more details.

    Such preparations and measurement outcomes then provide a notion of a canonical operational theory associated to a GPT system,\footnote{Conversely, one can always obtain a GPT system from an operational theory.
    Specifically, each state is an equivalence class of preparation procedures with respect to operational equivalence, and similarly one can define the effects \cite{holevo_probabilistic_2011}.} to which we can assign ontological models as in \cref{sec:prep_cont}.
    In this way, we obtain a canonical notion of an ontological model for GPT systems and a corresponding notion of noncontextuality \cite[Definitions 2 and 3]{plavala2022incompatibility}.

\subsection{Simulations of GPT systems}\label{sec:GPT_simulations}

    We are, however, interested in comparing GPT systems and not just in distinguishing the noncontextual ones.
    For this purpose, we use the notion of (univalent) simulations.
    
    \begin{definition}[{\cite[Definition 1]{MullerGarner2021}}]
        \label{def:simulation}
        Given two GPT systems $\A$ and $\B$, an $\epsilon$-\newterm{simulation} of $\A$ by $\B$ is a pair of 
        \begin{itemize}
            \item a multi-valued function $\Gamma : \Omega_\A \to \Omega_\B$ called \newterm{state $\epsilon$-simulation} and 
            \item a multi-valued function $\Theta : E_\A \to E_\B$ called \newterm{effect $\epsilon$-simulation},
        \end{itemize}
        which is empirically adequate up to an error $\epsilon \in [0,\infty)$:
        \begin{equation}\label{eq:sim_emp_adeq}
            \abs*[\big]{e \cdot \omega - f \cdot \gamma}  \leq \epsilon  \qquad  \forall \, \omega \in \Omega_\A ,\; \forall \, e \in E_\A , \; \forall \, \gamma \in \Gamma(\omega), \; \forall \, f \in \Theta(e) ,
        \end{equation}
        preserves convex mixtures:\footnotemark{}
        \begin{equation}\label{eq:sim_conv_mix}
            \lambda \Gamma(\omega) + (1 - \lambda) \Gamma(\omega')  \subseteq  \Gamma \bigl( \lambda \omega + (1 - \lambda) \omega' \bigr)  \qquad  \text{ for all } \lambda \in [0,1] \text{ and } \omega, \omega' \in \Omega_\A,
        \end{equation}
        and also preserves the null effect, i.e.\ we have $0_\B \in \Theta(0_\A)$.
        \footnotetext{Analogously, we require that the same holds for mixtures of effects rather than states, just as in \cite{MullerGarner2021}.}%
        
        A $0$-simulation is called an \newterm{exact} simulation.\footnote{Sometimes, we refer to an exact simulation also just as \emph{a simulation} for brevity.}
        If both $\Gamma$ and $\Theta$ are single-valued, we say that the simulation is \newterm{univalent}.
    \end{definition}
    
    One of the immediate consequences of the definition is that the images of states  under a state simulation (and of effects under an effect simulation) are non-overlapping. 
    Indeed, by empirical adequacy and the fact that $\Theta(e)$ is always a non-empty subset of $E_\B$, we find 
    \begin{equation}\label{eq:image_nonoverlap}
        \begin{split}
            \gamma \in \Gamma(\omega_1) \text{ and } \gamma \in \Gamma(\omega_2)  &\implies  \forall e \in E_\A \; : \; e \cdot \omega_1 = f \cdot \gamma = e \cdot \omega_2 \\
            &\implies \omega_1 = \omega_2,
        \end{split}
    \end{equation}
    where the second implication is by \eqref{eq:non-degenerate}.

    \begin{definition}\label{def:embed_preoder}
        Given a pair of GPT systems $(\A, \B)$, we say that $\A$ is \newterm{embeddable} within $\B$, denoted by $\A \hookrightarrow \B$, if there exists an exact univalent simulation of type $\A \to \B$.
        The relation $\hookrightarrow$ is called the \newterm{embeddability preorder}.
    \end{definition}
    Indeed, one can easily show that $\hookrightarrow$ is a preorder relation.
    Specifically, identity maps provide univalent simulations of type $\A \hookrightarrow \A$, which proves the reflexivity of $\hookrightarrow$.
    To prove that it is a transitive relation, one can use the triangle inequality (see \cref{prop:excess_triangle}).

    It has also been shown that
    \begin{itemize}
        \item a discrete ontological model of a GPT system $\A$ provides an exact simulation of $\A$ by a finite classical GPT system (and vice versa) \cite[Theorem 1]{MullerGarner2021}, 

        \item a discrete \emph{noncontextual} ontological model of a GPT system $\A$ is the same as an exact \emph{univalent} simulation of $\A$ by a finite classical GPT system \cite[Corollary 1]{MullerGarner2021}.
    \end{itemize}
    In this sense (univalent) simulations generalize (noncontextual) ontological models.

    We note that continuous ontological models of GPT systems correspond to simulations by infinite dimensional systems. 
    However, for the purposes of this work and in keeping with the related approaches of~\cite{Schmid2019,MullerGarner2021}, we restrict ourselves to discrete ontological models.

    \begin{example}[Simulations of a bit by a trit]\label{ex:bit_trit}
        Given a bit $\Delta_2$ and a trit $\Delta_3$ one can define a univalent simulation $(\Gamma, \Theta)$ of $\Delta_2$ by $\Delta_3$ via a direct embedding. 
        Namely, the state and effect simulations act on the bases of $\mathbb{R}^2$ and its dual (see \cref{ex:classical_GPT}) via
        \begin{equation}
            \Gamma \left( \delta_i \right) = \left\{ \delta_i \right\},  \qquad  \Theta \left( \delta_i^* \right) = \left\{\delta_i^*\right\}
        \end{equation}
        and can be extended to all states and effects by linearity.
        This simulation is univalent since both $\Gamma$ and $\Theta$ are single-valued.
        
        One can also adapt it to a simulation that is not preparation univalent while still being measurement univalent.
        For instance, we can keep the action on $\delta_1$, $\delta_1^*$ and adjust the simulation of the other deterministic state and effect to be
        \begin{equation}
            \Gamma \left( \delta_2 \right) = \Set*[\big]{ \lambda \delta_2 + (1-\lambda) \delta_3 \given \lambda \in [0,1] },  \qquad  \Theta \left( \delta_2^* \right) = \left\{\delta_2^* + \delta_3^*\right\}
        \end{equation}
        Once we require that $\Theta$ also preserves the unit effect, we can extend both to $\Delta_2$ and $\Delta_2^*$ by convex-linearity (and using Minkowski sum for the state simulation).
    \end{example}
\begin{example}[Holevo-Beltrametti-Bugasjki model~\cite{Beltrametti,holevo_probabilistic_2011}]\label{ex:HBB}
    Given a GPT system $\A = (\Omega_\A, E_\A, V_\A)$ with $n \in \mathbb{N}$ extremal states, the Holevo-Beltrametti-Bugasjki model is given by the following simulation $(\Gamma,\Theta) : \A \to \Delta_{n}$ where
    \begin{equation}
        \Gamma \left( \omega_i \right) = \left\{ \delta_i \right\},  \qquad  \Theta \left( e \right) = \left\{ \sum_{i = 1}^n (e \cdot \omega_i ) \delta_i^*\right\}
    \end{equation}
    where $\omega_i$ is the $i$-th extremal state of $\Omega_\A$ and $e$ is an arbitrary effect in $E_\A$.
    However, it is not a univalent simulation unless $\Omega_\A$ is a simplex.
    While the effect simulation is clearly single-valued, to extend the state simulation to all states of $\Omega_\A$ we need to ensure that property \eqref{eq:sim_conv_mix} holds.
    That is, for a given state $\omega$, its simulation $\Gamma(\omega)$ is given by the set of all convex decompositions of $\omega$ into extremal states (and applying $\Gamma$ to each extremal state to land within $\Delta_n$).
\end{example}

\subsection{Properties of univalent simulations}\label{sec:excess}

    The following proposition, which stresses how the composition with a classical system does not affect the univalency of a simulation, is relevant for motivating our notion of contextuality preorder further on (\cref{def:cont_preorder}) as well as for quantifying contextuality. 

    \begin{proposition}
        \label{prop:emb_reduction}
        Let $n$ and $k$ be arbitrary natural numbers.
        If there is a univalent $\epsilon$-simulation $(\Gamma, \Theta) : \A \otimes \Delta_n \to \B$, then there is also a univalent $\epsilon$-simulation $\A \to \B \otimes \Delta_k$.
    \end{proposition}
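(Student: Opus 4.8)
The plan is to obtain the desired simulation by sandwiching the given one between two \emph{free} embeddings: a canonical embedding $\A \hookrightarrow \A \otimes \Delta_n$ on the source side and a canonical embedding $\B \hookrightarrow \B \otimes \Delta_k$ on the target side. Since these flanking maps are exact ($0$-error) univalent simulations, and since composing simulations adds their errors (the triangle inequality, \cref{prop:excess_triangle}) while preserving univalence, the composite
\[
    \A \;\hookrightarrow\; \A \otimes \Delta_n \;\xrightarrow{\;(\Gamma,\Theta)\;}\; \B \;\hookrightarrow\; \B \otimes \Delta_k
\]
is a univalent $\epsilon$-simulation of type $\A \to \B \otimes \Delta_k$, which is exactly what is claimed. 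So the work reduces to producing the two flanking embeddings and checking that composition behaves as stated.

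First I would construct the embedding $\iota_\A : \A \to \A \otimes \Delta_n$ explicitly. Fix a vertex $\delta_1$ of the simplex $\Delta_n$ and let $1_{\Delta_n} = \sum_{i} \delta^*_i$ be its unit effect. Take the state simulation to be the (single-valued, linear) map $\omega \mapsto \omega \otimes \delta_1$ and the effect simulation to be $e \mapsto e \otimes 1_{\Delta_n}$. Both outputs are separable, hence lie in the minimal composite of \cref{def:minimal_tensor}; the null effect maps to $0_\A \otimes 1_{\Delta_n} = 0$, and the unit effect maps to $1_\A \otimes 1_{\Delta_n}$, which pairs to $1$ on every separable state and so equals $1_{\A \otimes \Delta_n}$ by non-degeneracy. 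Convex mixtures are preserved since both maps are linear, and empirical adequacy holds with zero error because $(e \otimes 1_{\Delta_n}) \cdot (\omega \otimes \delta_1) = (e\cdot\omega)(1_{\Delta_n}\cdot\delta_1) = e\cdot\omega$. As both simulations are single-valued, \cref{prop:NC-sim_def} shows $\iota_\A$ is univalent, i.e.\ an exact univalent simulation. The identical recipe, with $\Delta_k$ in place of $\Delta_n$ and $\B$ in place of $\A$, gives the second flanking embedding $\iota_\B : \B \to \B \otimes \Delta_k$.

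It then remains to check that the composite $\iota_\B \circ (\Gamma,\Theta) \circ \iota_\A$ is again a univalent $\epsilon$-simulation. For the simulation axioms of \cref{def:simulation}: the $0$-effect condition and the convex-mixture inclusion \eqref{eq:sim_conv_mix} pass through a composition $\Gamma_2 \circ \Gamma_1$ using monotonicity of $\Gamma_2$ under set inclusion together with its own convexity, and empirical adequacy \eqref{eq:sim_emp_adeq} follows from the triangle inequality with total error the sum of the two errors, here $0 + \epsilon + 0 = \epsilon$. Univalence is inherited because a composite of single-valued maps is single-valued, and $(\Gamma,\Theta)$ is single-valued by virtue of being univalent (\cref{prop:NC-sim_def}); so the composite state and effect simulations are single-valued, hence univalent. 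The only step demanding any care is the bookkeeping around the minimal tensor product when defining $\iota_\A$ and $\iota_\B$ — specifically, confirming that the tensored states and effects used are genuinely elements of the minimal composite and that unit effects factorize — and this is precisely where I would be most careful; everything else is routine.
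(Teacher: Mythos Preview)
Your proof is correct and follows essentially the same construction as the paper, which simply writes the composite down directly as $\Gamma_1(\omega) \coloneqq \Gamma(\omega \otimes \delta_1) \otimes \delta_1$ and $\Theta_1(e) \coloneqq \Theta(e \otimes \delta^*_1) \otimes \delta^*_1$ (using $\delta^*_1$ rather than your $1_{\Delta_n}$, $1_{\Delta_k}$ on the effect side, an immaterial variant) rather than factoring explicitly through the two flanking embeddings. One minor caveat: your appeal to \cref{prop:excess_triangle} is a forward reference in the paper's ordering, but no circularity arises since that proposition does not depend on \cref{prop:emb_reduction}.
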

    \begin{proof}
        We show this for $\epsilon = 0$, the general case is analogous.
        In particular, we define maps $\Gamma_1 : \Omega_\A \to \Omega_\B \otimes \Delta_k$ and $\Theta_1 : E_\A \to E_\B \otimes \Delta_k^*$ via
        \begin{equation}
            \Gamma_1 (\omega) \coloneqq \Gamma( \omega \otimes \delta_1) \otimes \delta_1,  \qquad \qquad  \Theta_1 (e) \coloneqq \Theta( e \otimes \delta^*_1) \otimes \delta^*_1.
        \end{equation}
        It follows that this is an exact univalent simulation by the assumption that $(\Gamma, \Theta)$ is.
        Specifically, we have
        \begin{equation}
            \Theta( e \otimes \delta^*_j) \cdot \Gamma( \omega \otimes \delta_i ) = (e \cdot \omega ) ( \delta^*_j \cdot \delta_i) = 
            \begin{dcases}
                e \cdot \omega  & \text{if } i=j \\
                0               & \text{otherwise,}
            \end{dcases}
        \end{equation}
        which implies $\Theta_1 (e) \cdot \Gamma_1 (\omega) = e \cdot \omega$.
    \end{proof}

    By determining the smallest error $\epsilon$ for which there is a univalent $\epsilon$-simulation of type $\A \to \B$, we obtain a meaningful notion of how far $\A$ is from being embeddable within $\B$.

    \begin{definition}\label{def:eps_min}
        The \newterm{excess} is the function $\excess : \G \times \G \to \mathbb{R}$ defined by
        \begin{equation}
            \excess(A,B) \coloneqq \inf \Set*[\big]{ \epsilon \in [0,\infty) \given  \text{there is a univalent $\epsilon$-simulation $A \to B$} },
        \end{equation}
        We call $\excess(A,B)$ the $A$-excess within $B$.
    \end{definition}

    Note that the value of excess is always in the interval $[0,1]$.
    One can show that it cannot exceed $1$ by observing that there is always a univalent $1$-simulation of type $\A \to \Delta_1$ and a univalent $0$-simulation of type $\Delta_1 \to \B$, and using the following proposition.

    \begin{proposition}
        \label{prop:excess_triangle}
        For any three GPT systems $A$, $B$, and $C$, we have the \newterm{triangle inequality for excess}
        \begin{equation}
            \label{eq:triangle_ineq}
            \excess(A, C) \leq \excess(A,B) + \excess(B,C).
        \end{equation}
    \end{proposition}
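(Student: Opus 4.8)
The plan is to \emph{compose} simulations and estimate how errors add up. Fix any $\epsilon_1 > \excess(A,B)$ and $\epsilon_2 > \excess(B,C)$ (if either right-hand term were infinite the inequality would be trivial, but by the remark after \cref{def:eps_min} the excess always lies in $[0,1]$). By the definition of infimum there is a univalent $\epsilon_1'$-simulation of type $A \to B$ with $\epsilon_1' \leq \epsilon_1$ and a univalent $\epsilon_2'$-simulation of type $B \to C$ with $\epsilon_2' \leq \epsilon_2$; since neither univalence, nor \eqref{eq:sim_conv_mix}, nor preservation of the null effect depends on the error parameter, and since \eqref{eq:sim_emp_adeq} with bound $\epsilon_i'$ implies the same with the larger bound $\epsilon_i$, these are in particular a univalent $\epsilon_1$-simulation $(\Gamma_1,\Theta_1):A\to B$ and a univalent $\epsilon_2$-simulation $(\Gamma_2,\Theta_2):B\to C$. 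By \cref{prop:NC-sim_def} their state and effect parts are single-valued, and single-valued maps satisfying \eqref{eq:sim_conv_mix} (and its effect analogue) are affine; equivalently, invoking the correspondence with $\epsilon$-embeddings recalled before \cref{def:embed_preoder}, we may treat $\Gamma_i$ and $\Theta_i$ as genuine linear maps restricting to the relevant state and effect spaces, with $\Theta_i(0)=0$.

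First I would verify that the composite $(\Gamma_2\circ\Gamma_1,\ \Theta_2\circ\Theta_1)$ is again a univalent simulation, now of type $A\to C$: it is single-valued (hence univalent, by \cref{prop:NC-sim_def}), it is affine and therefore preserves convex mixtures of states and of effects, it sends the null effect $0_A$ to $\Theta_2(0_B)=0_C$, and it maps $\Omega_A$ into $\Omega_C$ and $E_A$ into $E_C$ because each factor does. Second, I would bound its empirical-adequacy error. For $\omega\in\Omega_A$ and $e\in E_A$, write $\omega':=\Gamma_1(\omega)\in\Omega_B$ and $e':=\Theta_1(e)\in E_B$; then, by the triangle inequality in $\mathbb{R}$,
\begin{align*}
	\abs*[\big]{ e\cdot\omega - \Theta_2(e')\cdot\Gamma_2(\omega') }
	&\leq \abs*[\big]{ e\cdot\omega - e'\cdot\omega' } + \abs*[\big]{ e'\cdot\omega' - \Theta_2(e')\cdot\Gamma_2(\omega') } \\
	&\leq \epsilon_1 + \epsilon_2 ,
\end{align*}
where the first term is at most $\epsilon_1$ because $(\Gamma_1,\Theta_1)$ is an $\epsilon_1$-simulation applied to $\omega$ and $e$, and the second is at most $\epsilon_2$ because $(\Gamma_2,\Theta_2)$ is an $\epsilon_2$-simulation applied to the state $\omega'$ and effect $e'$ of $B$. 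Since $\Theta_2(e')\cdot\Gamma_2(\omega')$ is precisely the pairing for the composite simulation, this shows it is a univalent $(\epsilon_1+\epsilon_2)$-simulation of type $A\to C$, so $\excess(A,C)\leq\epsilon_1+\epsilon_2$. Taking the infimum over all admissible $\epsilon_1$ and $\epsilon_2$ yields \eqref{eq:triangle_ineq}.

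I do not expect a serious obstacle here: the content is a routine triangle-inequality estimate. The one point to handle with care is the bookkeeping in passing from a univalent simulation in the sense of \cref{def:NC-simulation} (a pair of multivalued maps) to single-valued affine maps (or $\epsilon$-embeddings), which is what makes ``composition'' well defined and well behaved; this is exactly the combined content of \cref{prop:NC-sim_def} and the embedding correspondence of \cite{MullerGarner2021}. As a byproduct, specializing to $\epsilon_1=\epsilon_2=0$ reproves that $\hookrightarrow$ is transitive, as asserted after \cref{def:embed_preoder}.
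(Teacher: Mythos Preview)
Your proposal is correct and follows essentially the same route as the paper: compose a univalent $\epsilon$-simulation $A\to B$ with a univalent $\gamma$-simulation $B\to C$ and use the ordinary triangle inequality to bound the empirical-adequacy error of the composite by $\epsilon+\gamma$. Your version is in fact more carefully written, handling the infimum explicitly and spelling out the verification of the remaining simulation axioms; the only minor quibble is that the remark you cite about $\excess\in[0,1]$ itself invokes the proposition being proved, but this is harmless since finiteness of the excess can be seen directly and is anyway irrelevant to the main argument.
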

    \begin{proof}
        Consider an arbitrary univalent $\epsilon$-simulation $(\Gamma,\Theta) : A \to B$ and an arbitrary univalent $\gamma$-simulation $(\alpha,\beta) : B \to C$.
        We prove the statement by showing that $(\alpha \, \Gamma, \beta \, \Theta) $ is a univalent simulation of type $A \to C$ with error $\epsilon + \gamma$, a result stated as Lemma 3 in \cite{MullerGarner2021}.

        To show that the composite simulation is empirically adequate as expressed by Inequality \eqref{eq:sim_emp_adeq}, we can use the triangle inequality.
        In particular, for all $e_\A \in E_A$ and all $\omega_\A \in \Omega_\A$, we have
       
        \begin{equation}
            \begin{aligned}
                \abs*[\big]{e_\A \cdot \omega_\A &- \beta \, \Theta(e_\A) \cdot \alpha \, \Gamma(\omega_\A) } \\
                &\leq \abs*[\big]{e_\A \cdot \omega_\A - \Theta(e_\A) \cdot \Gamma(\omega_\A) } + \abs*[\big]{\Theta(e_\A) \cdot \Gamma(\omega_\A) - \beta \, \Theta(e_\A) \cdot \alpha \, \Gamma(\omega_\A) } \\
                &\leq \epsilon + \gamma,
            \end{aligned}
        \end{equation}
        where the second inequality follows because $\Theta(e_\A)$ is an effect of $B$ and $\Gamma(\omega_\A)$ is a state of $B$, by assumption.
    \end{proof}

    Note that whenever $\B$ is embeddable within $\C$, we have $\excess(\B,\C) = 0$ and the triangle inequality reads
    \begin{equation}
        \excess(\A, \C) \leq \excess(\A,\B)
    \end{equation}
    for any GPT system $\A$.
    A similar statement follows by setting the first term on the right-hand side of inequality \eqref{eq:triangle_ineq} to zero.
    These two statements can be summarized as the following corollary of \cref{prop:excess_triangle}.
    
    \begin{corollary}
        \label{cor:excess_monotone}
        Let $\A$ and $\C$ be arbitrary GPT systems.
        With respect to the embeddability preorder among GPT systems, we have the following order-preserving functions:
        \begin{align}
            \excess(\A, \ph) &: (\G, \hookrightarrow) \to (\mathbb{R},  \geq), \\
            \label{eq:excess_monotone2}
            \excess(\ph, \C) &: (\G,  \hookrightarrow) \to (\mathbb{R},  \leq).
        \end{align}
    \end{corollary}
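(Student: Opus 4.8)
The plan is to obtain the corollary as an immediate consequence of the triangle inequality in \cref{prop:excess_triangle}, once we observe that embeddability coincides with the vanishing of excess. First I would record the elementary fact that $\B \hookrightarrow \C$ implies $\excess(\B,\C) = 0$: by \cref{def:embed_preoder} there is an exact univalent simulation of type $\B \to \C$, i.e.\ a univalent $0$-simulation, so $0$ lies in the set over which the infimum in \cref{def:eps_min} is taken; since excess is valued in $[0,\infty)$, this forces $\excess(\B,\C) = 0$. Both maps in the statement are manifestly well-defined real-valued functions (excess takes values in $[0,1]$), so no further preliminary is needed.

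Next, to show that $\excess(\A,\ph)$ is order-preserving as a map $(\G,\hookrightarrow) \to (\R,\geq)$, I would fix GPT systems with $\B \hookrightarrow \C$ and apply \cref{prop:excess_triangle} with $\B$ in the middle slot:
\[
    \excess(\A,\C) \;\leq\; \excess(\A,\B) + \excess(\B,\C) \;=\; \excess(\A,\B),
\]
using $\excess(\B,\C) = 0$ from the first step. Hence $\B \hookrightarrow \C \implies \excess(\A,\B) \geq \excess(\A,\C)$, which is exactly the claim that $\excess(\A,\ph)$ sends $\hookrightarrow$ to $\geq$. Symmetrically, for $\excess(\ph,\C) : (\G,\hookrightarrow) \to (\R,\leq)$ I would fix $\A \hookrightarrow \B$ and again invoke \cref{prop:excess_triangle} with $\B$ in the middle slot, this time using $\excess(\A,\B) = 0$:
\[
    \excess(\A,\C) \;\leq\; \excess(\A,\B) + \excess(\B,\C) \;=\; \excess(\B,\C),
\]
so $\A \hookrightarrow \B \implies \excess(\A,\C) \leq \excess(\B,\C)$, as required.

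There is no genuine obstacle here: all the work is already contained in \cref{prop:excess_triangle} (whose proof in turn reduces to composing simulations and applying the triangle inequality for $\abs{\ph}$). The only points requiring a little care are getting the two monotonicity directions right{\,---\,}excess is \emph{antitone} in its second argument and \emph{monotone} in its first{\,---\,}and placing $\B$ in the correct (middle) argument of the triangle inequality in each of the two cases.
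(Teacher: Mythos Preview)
Your proposal is correct and matches the paper's own argument essentially verbatim: the paper derives the corollary directly from \cref{prop:excess_triangle} by observing that $\B \hookrightarrow \C$ forces $\excess(\B,\C)=0$, then reads off both monotonicity statements by setting the appropriate term of the triangle inequality to zero.
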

    
    In other words, the monotonicity of $\excess(\ph, \C)$ from \eqref{eq:excess_monotone2} says that if 
    \begin{equation}\label{eq:excess_ineq}
        \excess(\A, \C) > \excess(\B,\C)
    \end{equation}
    holds (i.e.\ the minimal error of simulation by $C$ is strictly higher for $\A$ than for $\B$), then there cannot be an exact univalent simulation of type $\A \to \B$.
    We can thus use the excess within $\C$, for any GPT system $\C$, as a witness of the impossibility to construct an exact univalent simulation.
    
    While simulations are crucial to express a binary division between noncontextual and contextual GPT systems, the embeddability preorder $\hookrightarrow$ is \textit{not} a satisfactory refinement of this binary division. 
    The simplest argument to see this is that all classical GPT systems (which are trivially noncontextual) should be equivalent in a resource theory of contextuality. 
    However, for any integer $m$ greater than $k$, the classical GPT system $\Delta_m$ \emph{is not} embeddable within $\Delta_k$.
    Therefore, we cannot think of the embeddability preorder $\hookrightarrow$ as a resource ordering of some resource theory of contextuality.
    In the following section, we motivate a preorder that further refines the embeddability relation $\hookrightarrow$ and addresses the issue above by stipulating classical GPT systems to be `free resources'.
   
\section{Hierarchy of contextuality}
\label{sec:RT_contextuality}

    Recall that our interest in a hierarchy of contextuality of GPT systems stems from the wish to measure how useful contextuality is as a resource in information-theoretic applications.
    How can one decide whether a given measure of interest is indeed a well-motivated \emph{measure of contextuality}?
    The resource-theoretic perspective \cite{coecke2016mathematical,gonda2021resource}, fruitfully applied to studies of entanglement and many other resources in the past, suggests to first motivate a preorder $\succeq$ among the objects of interest.
    Importantly, some objects may be incomparable according to $\succeq$.
    In our case, the resource objects are GPT systems and the resource ordering is the advertised hierarchy of contextuality.
    The interpretation of such an ordering is that the relation $\A \succeq \B$ expresses the statement `$\A$ is at least as contextual as $\B$ is'.
    Given a good motivation for such a hierarchy, one can answer the above question as follows:
    \begin{quote}
        A (real-valued) function on the set $\G$ of all GPT systems is a measure of contextuality if and only if it is order-preserving with respect to the hierarchy of contextuality $\succeq$.
    \end{quote}

\subsection{Motivation and the resource theory}
\label{sec:cont_preorder}

Let us now list and discuss our desiderata for a sensible hierarchy of  contextuality among GPT systems.
They are:
\begin{enumerate}
    \item \label{it:classical_free} \textbf{Access to classical GPT systems is free.} That is, we have
    \begin{equation}
        \label{eq:classical_free}
        \A \preceq \A \otimes \Delta_n  \quad \text{ and } \quad \A \otimes \Delta_n \preceq \A
    \end{equation}
    for every GPT system $\A$ and every classical GPT system $\Delta_n$. 
    Here, $\otimes$ denotes the minimal composite from \cref{def:minimal_tensor}.

    \item \label{it:sim_free} \textbf{Exact univalent simulations are free.} That is, if $\A$ is embeddable within $\B$, then we have $\A \preceq \B$.
\end{enumerate}

Notice that the first condition in \cref{eq:classical_free}, i.e.\ $\A \preceq \A \otimes \Delta_n $, follows from desideratum \ref{it:sim_free} because every GPT system $\A$ is embeddable within $\A \otimes \Delta_n$ (see \cref{prop:emb_reduction}).

There are several consequences of desiderata 1 and 2 which we want to point the reader's attention to, before we discuss our motivation to introduce these desiderata.
\begin{enumerate}[(i)]

    \item Desideratum \ref{it:sim_free} implies that the trivial GPT system $\Delta_1$ is \textit{least contextual} in the sense that it is the bottom element of $\preceq$. 
    That is, we have $ \Delta_1 \preceq \A$ for every GPT system $\A$.

    \item Consequently, using the fact that $\Delta_1 \otimes \Delta_n$ is the same GPT system as $\Delta_n$, every classical GPT system is also at the bottom by Desideratum \ref{it:classical_free}. 

    \item Finally, using the fact that noncontextual systems are simplex embeddable and Desideratum~\ref{it:sim_free}, we conclude that all noncontextual GPT systems are equivalent to each other and lie at the bottom of the hierarchy.

\end{enumerate}

\paragraph{Motivation for Desideratum \ref{it:classical_free}.} 
Contextuality, as we are thinking of it, is a property that expresses the degree to which a system's behaviour escapes a purely classical explanation.
In this sense, considering additional classical systems should not affect this property.
An important point to note is that within the tensor product $\A \otimes \Delta_n$ (\cref{def:minimal_tensor}), only correlations involving separable states between the generic GPT system $\A$ and the classical system are allowed.
We can think of behaviours of $\A \otimes \Delta_n$ as dilations of those of $\A$ \cite{houghton2021mathematical}, given a classical environment variable $\Delta_n$.

If there were classical environmental variables, whose knowledge reduces the contextuality present in $\A$, then the contextuality of $\A$ would not be an authentic feature of $\A$, rather a consequence of one's too narrow focus on $\A$ as opposed to its dilation $\A \otimes \Delta_n$.
We can also support this argument on the technical side.
Indeed, by \cref{prop:emb_reduction}, any noncontextual model of $\A \otimes \Delta_n$ provides also a noncontextual model of $\A$.
In other words, $\A$ is noncontextual if and only if $\A \otimes \Delta_n$ is.\footnote{Note that under the minimal tensor product of~\Cref{def:minimal_tensor} allowing access to arbitrary noncontextual systems (and not just classical systems) would lead to the same hierarchy, which follows from the fact that access to classical systems already places all noncontextual systems at the bottom of the hierarchy.}
The hierarchy of contextuality ought to capture this fact.

\paragraph{Motivation for Desideratum \ref{it:sim_free}.}
We can also think of contextuality (of a GPT system)  as the inability to provide a noncontextual model (i.e.\ a univalent simulation by a classical system) for its statistical behaviours. 
If a GPT system $\A$ can be exactly simulated by a GPT system $\B$ via a univalent simulation, then any obstruction to constructing such a noncontextual model for $\A$ must already be present in $\B$.
In this sense, we think of $\B$ as being \qmarks{at least as contextual} as $\A$ is.
Indeed, provided with a noncontextual model of $\B$, i.e.\ an exact univalent simulation of $\B$ by a classical GPT system, one can construct a noncontextual model of $\A$ by composition with the simulation of $\A$ by $\B$.

The preorder relation among finite-dimensional GPT systems in $\G_\fin$ that satisfies precisely these requirements is the following one.
\begin{definition}
    \label{def:cont_preorder}
    We say that $\B$ is at least as contextual as $\A$, denoted $\A \preceq \B$, whenever there is an exact univalent simulation of $\A$ by $\B \otimes \Delta_n$ for some finite-dimensional classical GPT system $\Delta_n$.
    The preordered set $(\G_\fin,\contord)$ is called the \newterm{hierarchy of contextuality}.
\end{definition}
In other words, we define the hierarchy via embeddability given an additional classical system:
\begin{equation}
    \A \preceq \B  \quad \iff \quad  \exists \, n \in \mathbb{N}   \text{ such that } \A \hookrightarrow \B \otimes \Delta_n \text{ holds.}
\end{equation}
To show that the hierarchy $\preceq$ is a preorder relation, we can use a resource-theoretic perspective, i.e.\ we can show that the set of free operations in the following resource theory is closed under composition.
\begin{definition}\label{def:RT_contextuality}
    The \newterm{resource theory of GPT-contextuality} is defined as follows.
    Its objects are finite-dimensional GPT systems in $\G_\fin$.
    The set of transformation from $\B$ to $\A$ is identified with exact simulations of type ${\A \to \B \otimes \Delta_n}$.
    Two such simulations 
    \begin{equation}\label{eq:res_trans}
        (\Gamma, \Theta) : \A \to \B \otimes \Delta_m  \quad \text{and} \quad 
        (\alpha, \beta) : \B \to \C \otimes \Delta_k
    \end{equation}
    can be composed (sequentially) to produce a simulation
    \begin{equation}
        (\alpha \, \Gamma, \beta \, \Theta ) : \A \to \C \otimes \Delta_{m \cdot k}
    \end{equation}
    via the canonical isomorphism between $\Delta_m \otimes \Delta_k$ and $\Delta_{m \cdot k}$.
    Free transformations are those simulations that are also \emph{univalent}.
\end{definition}
This gives rise to a concrete resource theory in the sense of \cite[Definition 3.10]{gonda2021resource}.
Moreover, the resulting resource ordering coincides with our hierarchy of contextuality. 
In a general resource theory, the existence of a free transformation from a resource object $r$ to a resource object $s$ defines the resource ordering denoted by $s \preceq r$, i.e.\ $r$ is at least as good as a resource than $s$ is.

In our work, a simulation of type $\C \to \D$ is interpreted as providing effective access to system $\C$ to an agent with access to $\D$. 
In this sense, access to $\D$ grants the agent at least as many capabilities as access to $\C$.
For this reason, a simulation of type $\A \to \B \otimes \Delta_n$ is identified with a resource-theoretic transformation from $\B$ to $\A$ and not vice versa.
Note, however, that it is not in general realisable by a physical transformation from $\B$ to $\A$ in the GPT sense.
A physical transformation $\B \to \A$ is given by a linear map $M : V_\B \to V_\A$ such that $M(\Omega_\B) \subseteq \Omega_\A$ and $M^*(E_\A) \subseteq E_\B$. 
For further discussion see~\Cref{sec:contextualityerasure,app:phys_sim}.

\subsection{Contextuality of composite systems} \label{sec:contextuality_composite}
    
    When studying a resource theory, it is often also important to specify how resource objects are composed to obtain a joint resource.
    As discussed in \cref{sec:GPT_systems}, GPT systems do not have a canonical tensor product. 
    However, any sensible choice of a tensor product $\tildetensor$ should give rise to a symmetric monoidal category with exact simulations as morphisms \cite[Section 5.2]{plavala2021general}.
    Since the transformations in our resource theory of GPT-contextuality are not merely simulations of GPT systems, but also involve additional classical systems, it takes a bit more work to show that they (and the free transformations) are closed under parallel composition.
    We prove this in the rest of this section for the case of free operations.
    
    Consider two exact univalent simulations of types $\A_1 \to \B_1 \otimes \Delta_m$ and $\A_2 \to \B_2 \otimes \Delta_k$ respectively.
    The assumption that $\tildetensor$ is the monoidal product of a symmetric monoidal category then means that there is also an exact univalent simulation of type
    \begin{equation}\label{eq:tensor_simulation}
        \A_1 \tildetensor \A_2 \to \left( \B_1 \otimes \Delta_m \right) \tildetensor \left( \B_2 \otimes \Delta_k \right).
    \end{equation} 
    Assuming that there is a unique composite of any GPT system with a classical one given by the canonical composite of~\Cref{def:minimal_tensor} (see~\cite{Aubrun_2021,aubrin_entanglement_2022} for a derivation of the uniqueness of the composition with a classical system under the assumption of local tomography and~\cite{Erba2020} for an alternative composition of classical systems without this assumption) and that $\tildetensor$ is associative, the codomain can be expressed as
    \begin{equation}\label{eq:classical_tensor_move}
        \begin{split}
            \left( \B_1 \otimes \Delta_m \right) \tildetensor \left( \B_2 \otimes \Delta_k \right) &=  \left( \B_1 \tildetensor \Delta_m \right) \tildetensor \left( \B_2 \tildetensor \Delta_k \right) \\
            &\cong \left( \B_1 \tildetensor \B_2 \right) \tildetensor \left( \Delta_m \tildetensor \Delta_k \right) \\
            &\cong \left( \B_1 \tildetensor \B_2 \right) \tildetensor \Delta_{m \cdot k} \\
            &= \left( \B_1 \tildetensor \B_2 \right) \otimes \Delta_{m \cdot k},
        \end{split}
    \end{equation}
    where the third equality uses the standard isomorphism between $\Delta_m \otimes \Delta_k$ and $\Delta_{m \cdot k}$. 
    In the first and final steps we also use the assumption that composition with a classical system is unique.
    Consequently, we get a parallel composite of the two free transformations as an exact univalent simulation of type 
    \begin{equation}
        \A_1 \tildetensor \A_2 \to \left( \B_1 \tildetensor \B_2 \right) \otimes \Delta_{m \cdot k}.
    \end{equation}
    
    As a consequence, the parallel composition of resources (i.e.\ GPT systems) via $\tildetensor$ respects the hierarchy of contextuality in the sense that we have
    \begin{equation}\label{eq:tensor_order}
        \A_1 \preceq \B_1  \; \land \;  \A_2 \preceq \B_2  \quad \implies \quad  \A_1 \tildetensor \A_2 \preceq \B_1 \tildetensor \B_2,
    \end{equation}
    for arbitrary GPT systems $\A_i$ and $\B_i$.
    Condition \eqref{eq:tensor_order} is part of the definition of a resource theory via ordered commutative monoids in \cite{Fritz2017}. 
    It means that we can consistently apply the hierarchy of contextuality also to express the contextuality of composite GPT systems, as long as their composition satisfies a few basic properties as described above.

\subsection{Quantifying contextuality via the classical excess}
\label{sec:cont_monotone}

As is common in resource theories \cite{gonda2023monotones}, we can use monotones{\,\textemdash\,}i.e.\ order-preserving functions from resources to numbers{\,\textemdash\,}to study the hierarchy of contextuality.
On the one hand, these can provide a lens through which to extract properties of the preorder.
On the other hand, they can be used as quantitative measures of contextuality.
The latter perspective is particularly useful when the monotones come equipped with an (operational) interpretation that allows one to identify \textit{which aspect of contextuality} they are measuring.

In order to construct a specific contextuality monotone, we will make use of the excess function from \cref{def:eps_min}.
Of particular interest is $\excess(\A, \Delta_{\mathbb{N}})$, the $\A$-excess within the countable classical GPT system (\cref{ex:integer_simplex}).
For brevity, we call it the \newterm{classical excess} of $\A$.  
A crucial ingredient in the following proof is that $\Delta_{\mathbb{N}}$ is isomorphic to $\Delta_{\mathbb{N}} \otimes \Delta_n$ for any finite $n$.

\begin{lemma}
    \label{lem:clas_no_excess}
    For any GPT system $\A$ and any finite-dimensional classical GPT system $\Delta_n$, we have
    \begin{equation}
        \excess(\A \otimes \Delta_n,\Delta_\N) = \excess(\A,\Delta_\N)
    \end{equation}
\end{lemma}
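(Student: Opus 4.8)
The plan is to prove the two inequalities $\excess(\A \otimes \Delta_n, \Delta_\N) \le \excess(\A, \Delta_\N)$ and $\excess(\A, \Delta_\N) \le \excess(\A \otimes \Delta_n, \Delta_\N)$ separately; both directions hinge on the observation that the countable classical system is insensitive to tensoring on a finite classical factor, i.e.\ $\Delta_\N \otimes \Delta_n \cong \Delta_\N$ as GPT systems. This isomorphism is just a bijective relabelling of $\N \times \{1,\dots,n\}$ by $\N$, it is an exact univalent simulation in either direction, and it is compatible with the minimal composite (recall that the composite of a GPT system with a classical one is canonically the minimal composite of \cref{def:minimal_tensor}). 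I will also use repeatedly that, as in the proof of \cref{prop:excess_triangle}, the composite of a univalent $\epsilon$-simulation with a univalent $0$-simulation is again a univalent $\epsilon$-simulation.

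For $\excess(\A, \Delta_\N) \le \excess(\A \otimes \Delta_n, \Delta_\N)$, I would fix an arbitrary $\epsilon$ strictly above the right-hand side, so that there is a univalent $\epsilon$-simulation $\A \otimes \Delta_n \to \Delta_\N$. Applying \cref{prop:emb_reduction} (with $k = 1$ in the notation of that proposition) produces a univalent $\epsilon$-simulation $\A \to \Delta_\N \otimes \Delta_1 \cong \Delta_\N$, whence $\excess(\A, \Delta_\N) \le \epsilon$; taking the infimum over admissible $\epsilon$ gives the inequality.

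For the reverse inequality, I would argue in the opposite direction: given a univalent $\epsilon$-simulation $(\Gamma, \Theta) : \A \to \Delta_\N$, form its tensor product with the identity simulation on $\Delta_n$, namely the pair of linear maps $(\Gamma \otimes \id, \Theta \otimes \id) : \A \otimes \Delta_n \to \Delta_\N \otimes \Delta_n$ (well defined because univalent simulations extend to linear maps, i.e.\ embeddings, by \cref{prop:NC-sim_def} and the surrounding discussion). Being single-valued and linear, it is univalent, it sends the $0$ effect to the $0$ effect, and it preserves convex mixtures; the only point needing a computation is empirical adequacy, which follows by writing a general state of the minimal composite as $\sum_k p_k\, \omega_k \otimes \mu_k$ and a general effect as $\sum_l q_l\, e_l \otimes \xi_l$, and then using bilinearity of the pairing, the bound $\xi_l \cdot \mu_k \in [0,1]$, and the $\epsilon$-adequacy of $(\Gamma, \Theta)$:
\begin{equation}
    \abs*[\big]{e \cdot \omega - (\Theta \otimes \id)(e) \cdot (\Gamma \otimes \id)(\omega)} \;\le\; \sum_{k,l} p_k q_l\, (\xi_l \cdot \mu_k)\, \epsilon \;\le\; \epsilon .
\end{equation}
Postcomposing with the isomorphism $\Delta_\N \otimes \Delta_n \cong \Delta_\N$ then yields a univalent $\epsilon$-simulation $\A \otimes \Delta_n \to \Delta_\N$, so $\excess(\A \otimes \Delta_n, \Delta_\N) \le \epsilon$; taking the infimum over admissible $\epsilon$ completes this direction, and combining the two inequalities gives the claimed equality.

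I expect the main obstacle to be this second direction, specifically the verification that tensoring a univalent $\epsilon$-simulation with an identity still satisfies \cref{eq:sim_emp_adeq} on \emph{all} of $\Omega_{\A \otimes \Delta_n}$ and $E_{\A \otimes \Delta_n}$ (not merely on product states and effects), together with checking carefully that the relabelling $\Delta_\N \otimes \Delta_n \cong \Delta_\N$ really is an isomorphism of GPT systems compatible with \cref{def:minimal_tensor}. Both are routine given the explicit descriptions in \cref{ex:classical_GPT,ex:integer_simplex}, but they carry the actual content of the lemma.
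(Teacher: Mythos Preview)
Your proposal is correct and follows essentially the same approach as the paper: one inequality via \cref{prop:emb_reduction}, the other by tensoring an $\epsilon$-embedding with the identity on $\Delta_n$ and then composing with an exact embedding $\Delta_\N \otimes \Delta_n \hookrightarrow \Delta_\N$. The only cosmetic differences are that the paper writes down a specific (divisibility-based) linear map for this last embedding rather than invoking a generic bijection $\N \times \{1,\dots,n\} \cong \N$, and it simply asserts that $(\phi \otimes \id, \psi \otimes \id)$ is an $\epsilon$-embedding where you supply the explicit empirical-adequacy estimate.
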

\begin{proof}
    Firstly, the inequality 
    \begin{equation}
        \excess(\A \otimes \Delta_n,\Delta_\N) \geq \excess(\A,\Delta_\N)
    \end{equation}
    is a direct consequence of \cref{prop:emb_reduction} and the definition of excess.

   For the converse inequality, note that any univalent $\epsilon$-simulation $(\Gamma,\Theta) : \A \to \Delta_\N$ can be canonically extended to a univalent $\epsilon$-simulation
    \begin{equation}
        \bigl( \Gamma \otimes \id, \Theta \otimes \id \bigr) \; : \; \A \otimes \Delta_n \to \Delta_\N \otimes \Delta_n,
    \end{equation}
    so that we get 
    \begin{equation}\label{eq:emb_extension}
        \excess(\A,\Delta_\N) \geq \excess(\A \otimes \Delta_n,\Delta_\N \otimes \Delta_n).
    \end{equation}
    Furthermore, there is (an exact) univalent simulation $\Delta_\N \otimes \Delta_n \to \Delta_\N$ given on states by the linear map
    \begin{equation}
        \begin{aligned}
            \ell^1 \otimes \R^n &\to \ell^1 \\
            (\omega_i)_{i \in \N} \otimes \delta_j &\mapsto \bigl( \omega_{i/j} \delta_{j | i} \bigr)_{i \in \N},
        \end{aligned}
    \end{equation}
    where $(\omega_i)_{i \in \N}$ is an element of $\ell^1$, $\delta_j$ is a basis vector in $\R^n$, and $\delta_{j | i}$ returns $1$ if $j$ divides $i$ and $0$ otherwise.
    Using basically the same linear map for effects, merely extended to $\ell^\infty \otimes \R^n \to \ell^\infty$, gives the required simulation.
    Consequently, we have
    \begin{equation}
        \excess(\Delta_\N \otimes \Delta_n , \Delta_\N) = 0,
    \end{equation}
    so that, by Inequality \eqref{eq:emb_extension} and \cref{prop:excess_triangle}, we get
    \begin{equation}
        \begin{aligned}
            \excess(\A,\Delta_\N) &\geq \excess(\A \otimes \Delta_n,\Delta_\N \otimes \Delta_n) + \excess(\Delta_\N \otimes \Delta_n , \Delta_\N) \\
            &\geq \excess(\A \otimes \Delta_n,\Delta_\N)
        \end{aligned}
    \end{equation}
    as required.
\end{proof}

\begin{theorem}
    \label{thm:inf_excess}
    The function $\excess(\ph, \Delta_\N) : \G \to \R$, which maps a GPT system to its excess within $\Delta_\N$, is order-preserving.
    That is, we have
    \begin{equation}
        \A \preceq \B  \quad \implies \quad  \excess(\A,\Delta_\N) \leq \excess(\B,\Delta_\N) .
    \end{equation}
\end{theorem}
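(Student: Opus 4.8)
The plan is to reduce the claim to the triangle inequality for excess (\cref{prop:excess_triangle}) together with \cref{lem:clas_no_excess}, which already isolates the only nontrivial ingredient, namely that a finite classical ancilla can be absorbed into the countable classical system $\Delta_\N$.

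First I would unfold the hypothesis. By \cref{def:cont_preorder}, $\A \preceq \B$ means that there is some $n \in \N$ admitting an exact univalent simulation (equivalently a $0$-embedding) of type $\A \to \B \otimes \Delta_n$. By \cref{def:eps_min} this is precisely the statement $\excess(\A, \B \otimes \Delta_n) = 0$.

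Next I would apply \cref{prop:excess_triangle} with the intermediate system $\B \otimes \Delta_n$:
\begin{equation}
    \excess(\A, \Delta_\N) \;\leq\; \excess(\A, \B \otimes \Delta_n) + \excess(\B \otimes \Delta_n, \Delta_\N) \;=\; \excess(\B \otimes \Delta_n, \Delta_\N),
\end{equation}
where the equality uses $\excess(\A, \B \otimes \Delta_n) = 0$ from the previous step. Finally, \cref{lem:clas_no_excess} gives $\excess(\B \otimes \Delta_n, \Delta_\N) = \excess(\B, \Delta_\N)$, and chaining the two relations yields the desired inequality $\excess(\A, \Delta_\N) \leq \excess(\B, \Delta_\N)$.

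I do not expect any genuine obstacle, since the substantive work has been offloaded to \cref{lem:clas_no_excess}, whose proof in turn rests on the isomorphism $\Delta_\N \cong \Delta_\N \otimes \Delta_n$ realised by a divisibility-indexed relabelling of $\ell^1$. If one preferred a self-contained argument, one could instead directly compose a given univalent $\epsilon$-simulation $\B \to \Delta_\N$ with the fixed $0$-embedding $\A \to \B \otimes \Delta_n$ and with the $0$-embedding $\Delta_\N \otimes \Delta_n \to \Delta_\N$ appearing in the proof of \cref{lem:clas_no_excess}, and then take the infimum over $\epsilon$; but invoking the already-established lemmas is cleaner and avoids repeating that construction.
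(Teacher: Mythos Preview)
Your proof is correct and follows essentially the same route as the paper: the paper invokes \cref{cor:excess_monotone} (itself an immediate consequence of the triangle inequality) in place of your direct application of \cref{prop:excess_triangle}, and then concludes via \cref{lem:clas_no_excess} exactly as you do.
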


\begin{proof}
    By \cref{def:cont_preorder}, \cref{cor:excess_monotone} (specialized to $\C = \Delta_\N$), and \cref{lem:clas_no_excess} respectively, we have
    \begin{equation}
        \begin{aligned}
            \A \preceq \B  \quad &\iff \quad  \exists \, n  \; \text{ such that } \; \A \hookrightarrow \B \otimes \Delta_n , \\
            &\;\implies \quad \exists \, n  \; \text{ such that } \; \excess(\A ,\Delta_\N) \leq \excess(\B\otimes \Delta_n,\Delta_\N) \\
            &\iff \quad  \excess(\A,\Delta_\N) \leq \excess(\B,\Delta_\N).
        \end{aligned}
    \end{equation}
    This gives us the statement of the theorem.
\end{proof}

\begin{remark}[Excess within a finite simplex]
    An important feature of the monotone $\excess(\A, \Delta_{\mathbb{N}})$ is that it can sometimes be computed by considering the $\A$-excess within $\Delta_{k}$ for a finite $k$.
    This happens whenever its value $\excess(\A, \Delta_m)$ remains constant for all classical systems with $m$ larger than $k$.
    For instance, the noncontextual GPT system corresponding to the toy bit in Spekkens toy theory \cite{Spekkens2007} is embedabble within $\Delta_4$ with vanishing error, and hence for any larger $m$ its excess within $\Delta_m$ has to remain $0$.
\end{remark}

\subsection{Parity oblivious multiplexing success probability with free classical resources as a measure of contextuality}\label{sec:POM}

Having introduced the hierarchy of contextuality as well as a new contextuality measure in the form of the classical excess, we now build a measure of generalized contextuality based on the parity oblivious multiplexing (POM)~\cite{Spekkens2009} game. The latter is  a protocol that is powered by preparation contextuality and has raised significant attention in recent years \cite{Banik2015,Chailloux2016,Ghorai2018,Saha2019,Ambainis2019,Tavakoli2021,Catani2024,khoshbin2023}.

In the \newterm{$\bm{n}$-bit Parity Oblivious Multiplexing (POM) game}, Alice is given an $n$-bit string, whose possible values she encodes in a GPT system $\A$.
In particular she chooses a state $\omega_{ x}$ for possible string value $ x$.
The system is then transmitted to Bob is additionally given an integer $y \in \{1,2,\ldots, n\}$. 
Bob's task is to guess the value $x_y$ of the y$^{\rm th}$ bit of $ x$ by performing a two-outcome measurement $M_y$. 
His guess is denoted by $b \in \{0,1\}$.  

The task has a supplementary constraint called parity obliviousness (PO).
Namely, Alice cannot communicate the parity of the string, denoted by $p( x)$, to Bob. 
The PO condition can be phrased in terms of Alice's chosen states as 
\begin{equation}\label{eq:PO}
    \sum_{ x| p( x) = 0} \omega_{ x} =  \sum_{ x| p( x) = 1} \omega_{ x}.
\end{equation}
Note that the string ${x}$ is assumed to be uniformly distributed, so that the left-hand side of \cref{eq:PO} can be interpreted (up to a factor of $2^{n-1}$) as the average state sent by Alice given that the characters of ${x}$ add up to $0$ modulo $2$ and similarly for the right-hand side.

The average success probability is given by
\begin{equation}
    \sum_{ x , y} \frac{1}{2^{n} n} p(b =  x_y \,|\, \omega_{ x}, M_y),
\end{equation}
where $2^{-n}$ is the probability of Alice receiving string $ x$, $1/n$ is the probability of Bob receiving $y$, and ${p(b =  x_y \,|\, \omega_{ x}, M_y)}$ is the probability that Bob's guess is correct in a given run.

It is shown in~\cite[Theorem 2]{Spekkens2009} that if a system $\A$ is noncontextual, then the \newterm{optimal} (i.e.\ maximal) \newterm{success probability} for the $n$-bit POM, denoted by $\mathfrak{p}_n$, is upper bounded by 
\begin{equation}
    \mathfrak{p}_n(\A) \leq \frac{n+1}{2n} =  \mathfrak{p}_n(\Delta_\infty). 
\end{equation}

The success probability for POM can be thus used as a witness of contextuality.

We now prove that for the embeddability preorder the POM success probability is also order-preserving.

\begin{lemma}\label{lem:embed_POM_prob}
    Let $n$ be any integer.
    If $\A$ is embeddable within $\B$, then we have $\mathfrak{p}_n(\A) \leq \mathfrak{p}_n(\B)$.
\end{lemma}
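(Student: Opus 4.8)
The plan is to transport a POM strategy for $\A$ along the embedding to produce a POM strategy for $\B$ with exactly the same success probability, and then compare optima over strategies. By \cref{def:embed_preoder} and the identification of univalent simulations with embeddings discussed in \cref{sec:GPT_simulations}, the hypothesis $\A \hookrightarrow \B$ furnishes linear maps $\phi : V_\A \to V_\B$ and $\psi : V_\A^* \to V_\B^*$ with $\phi(\Omega_\A) \subseteq \Omega_\B$, $\psi(E_\A) \subseteq E_\B$, $\psi(1_\A) = 1_\B$, and $\psi(e) \cdot \phi(\omega) = e \cdot \omega$ for all $\omega \in \Omega_\A$ and $e \in E_\A$.

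Next I would take an arbitrary $n$-bit POM strategy for $\A$: an encoding $x \mapsto \omega_x \in \Omega_\A$ obeying the parity-obliviousness constraint \eqref{eq:PO}, together with two-outcome measurements $M_y = \{e_{y,0}, e_{y,1}\}$ for $y \in \{1, \ldots, n\}$. Define the pushed-forward data $\omega'_x := \phi(\omega_x)$ and $M'_y := \{\psi(e_{y,0}), \psi(e_{y,1})\}$, and check that this is an admissible strategy for $\B$: each $\omega'_x$ lies in $\Omega_\B$ and each $\psi(e_{y,b})$ lies in $E_\B$ because $\phi$ and $\psi$ preserve states and effects; each $M'_y$ is normalized since $\psi$ is linear and unit-preserving, so $\psi(e_{y,0}) + \psi(e_{y,1}) = \psi(e_{y,0} + e_{y,1}) = \psi(1_\A) = 1_\B$; and applying the linear map $\phi$ to both sides of \eqref{eq:PO} shows that $(\omega'_x)_x$ again satisfies parity obliviousness.

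Finally, pairing preservation gives $\psi(e_{y,x_y}) \cdot \phi(\omega_x) = e_{y,x_y} \cdot \omega_x$ for every pair $(x,y)$, so the $\B$-strategy reproduces the winning probability of the $\A$-strategy in every run and hence has the same average success probability. Consequently the set of POM success probabilities attainable with $\A$ is contained in the set attainable with $\B$, and taking suprema yields $\mathfrak{p}_n(\A) \leq \mathfrak{p}_n(\B)$. The only slightly delicate point is the admissibility of $M'_y$, which hinges on an embedding sending the unit effect of $\A$ to the unit effect of $\B$; everything else is immediate from linearity and the defining pairing-preservation property of embeddings, so I do not anticipate a genuine obstacle.
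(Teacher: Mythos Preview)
Your proposal is correct and follows essentially the same approach as the paper: push an arbitrary POM strategy for $\A$ through the embedding, use linearity to verify that parity-obliviousness is preserved, use exact pairing-preservation to conclude the success probability is unchanged, and take the supremum. Your write-up is in fact more careful than the paper's, which does not explicitly check measurement normalization; the subtlety you flag about the unit effect is glossed over there as well.
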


\begin{proof}
    Let us consider  any choice of states $\omega_x$ (for $x \in \{0,1\}^n$) which obey the parity obliviousness constraint \eqref{eq:PO}.
    Since we have $\A \hookrightarrow \B$, for any choice of effects $e_y$ and states $\omega_x$ on $\A$ obeying the parity obliviousness constraint, the states $\Gamma(\omega_x)$ on $\B$ also obey it by the linearity of the single-valued state simulation $\Gamma$.
    Moreover, the empirical adequacy (\cref{eq:sim_emp_adeq}) of the exact simulation implies $e_y \cdot \omega_x = \Gamma(e_y) \cdot \Gamma(\omega_x)$. 
    Thus, any strategy on $\A$ can be implemented on $\B$ with equal success probability. 
    This implies $\mathfrak{p}_n(\A) \leq \mathfrak{p}_n(\B)$.
\end{proof}

However, it is not a resource monotone in our resource theory of GPT-contextuality.

\begin{lemma}
    The optimal success probability for POM is not a monotone with respect to the hierarchy of contextuality $\preceq$.
\end{lemma}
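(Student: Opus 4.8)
The plan is to exploit the general principle that a monotone must take equal values on $\preceq$-equivalent resources, together with the observation that adjoining a finite classical system can strictly increase the POM success probability without changing how contextual a system is. I would take the two classical systems $\Delta_1$ and $\Delta_{2n}$. By desideratum \ref{it:classical_free} (free access to classical systems), $\Delta_{2n} = \Delta_1 \otimes \Delta_{2n} \preceq \Delta_1$, while $\Delta_1 \hookrightarrow \Delta_{2n}$ gives $\Delta_1 \preceq \Delta_{2n}$; hence $\Delta_1$ and $\Delta_{2n}$ are $\preceq$-equivalent, as they must be, both being classical and therefore least elements of the hierarchy. Consequently, were $\mathfrak{p}_n$ order-preserving for $\preceq$, we would need $\mathfrak{p}_n(\Delta_1) = \mathfrak{p}_n(\Delta_{2n})$.

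The next step is to evaluate the two sides. On $\Delta_1$ the system transmits nothing, so Bob learns nothing about $x$ and his best strategy is to output a fixed bit, giving $\mathfrak{p}_n(\Delta_1) = 1/2$. On $\Delta_{2n}$, Alice uses the standard strategy: she picks $j \in \{1,\dots,n\}$ uniformly and transmits the pair $(j, x_j)$, encoded by one of the $2n$ extremal states. A short combinatorial check confirms this respects the parity-obliviousness constraint \eqref{eq:PO} (for each message exactly half of the consistent strings have even parity). Bob outputs $x_j$ when $j = y$ and guesses otherwise, which succeeds with probability $\tfrac{1}{n} + \tfrac{n-1}{n}\cdot\tfrac{1}{2} = \tfrac{n+1}{2n}$; hence $\mathfrak{p}_n(\Delta_{2n}) \geq \tfrac{n+1}{2n}$ (with equality by the noncontextuality bound of \cite[Theorem 2]{Spekkens2009}, although only the inequality is needed).

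Since $\tfrac{n+1}{2n} > \tfrac{1}{2}$ for every $n \geq 2$, we conclude $\mathfrak{p}_n(\Delta_{2n}) \neq \mathfrak{p}_n(\Delta_1)$ even though $\Delta_{2n}$ and $\Delta_1$ are $\preceq$-equivalent, contradicting order-preservation; this proves the lemma (for $n=1$ the game is degenerate, with $\mathfrak{p}_1 \equiv 1/2$). I expect the only step needing any care to be the verification that the explicit $\Delta_{2n}$-strategy is parity oblivious; the rest is immediate from the definitions, from \cref{lem:embed_POM_prob}, and from the consequences of desiderata \ref{it:classical_free} and \ref{it:sim_free}. One could instead invoke \cref{lem:embed_POM_prob} to obtain $\mathfrak{p}_n(\Delta_1) \leq \mathfrak{p}_n(\Delta_{2n})$ directly and then argue only strictness, but the equivalence-class framing makes the non-monotonicity most transparent.
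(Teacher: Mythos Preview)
Your argument is correct and follows the same overall logic as the paper's proof: exhibit two $\preceq$-equivalent noncontextual systems on which $\mathfrak{p}_n$ takes distinct values. The difference is purely in the choice of example. The paper introduces a \emph{noisy bit} $\Delta_2^{\alpha}$ (same state space as $\Delta_2$ but with a restricted effect space parametrized by $\alpha \in (0,1/2]$), argues that $\Delta_2^{\alpha}$ and $\Delta_2$ are both noncontextual and hence $\preceq$-equivalent, and then computes $\mathfrak{p}_n(\Delta_2^{\alpha}) = \tfrac{n+1-2\alpha}{2n} < \tfrac{n+1}{2n} = \mathfrak{p}_n(\Delta_2)$ using the strategy of encoding the first bit of $x$. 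Your route via $\Delta_1$ versus $\Delta_{2n}$ is arguably more elementary: both systems are classical, so their $\preceq$-equivalence is immediate from desideratum~\ref{it:classical_free} without needing to introduce a new GPT or argue noncontextuality separately, and the computation $\mathfrak{p}_n(\Delta_1)=1/2$ is trivial. The paper's example, on the other hand, illustrates a slightly sharper phenomenon{\,---\,}that $\mathfrak{p}_n$ fails to be monotone even between two systems with the \emph{same} state space differing only in the available effects{\,---\,}which connects more directly to the subsequent motivation for the $\pomyield$ construction. Both proofs work for $n\geq 2$; your parenthetical about $n=1$ is accurate and matches an implicit restriction in the paper's argument as well.
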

\begin{proof}
    We define  a \textit{noisy bit} to be any system with the same states as the classical bit $\Delta_2$, but whose effects form a proper subset of $\Delta_2^*$.
    In particular, a noisy bit has no measurement that distinguishes the two extremal states $\delta_1$ and $\delta_2$ with certainty. 
    Noisy bits can be parametrized by a noise parameter $\alpha \in (0, 1/2]$, giving rise to a GPT system $\Delta_2^{\alpha}$ whose extremal effects $\chi_1$ and $\chi_2$ satisfy
    \begin{equation}
        \chi_i \cdot \delta_j = 
            \begin{cases}
                1 - \alpha  & \text{if } i=j \\
                \alpha  & \text{if } i \neq j.
            \end{cases}
    \end{equation}
    By the desiderata~\ref{it:classical_free} and \ref{it:sim_free}, a noisy bit $\Delta_2^{\alpha}$ and a bit $\Delta_2$ must satisfy $\Delta_2^{\alpha} \preceq \Delta_2$ and $\Delta_2 \preceq \Delta_2^{\alpha}$ since they are both noncontextual. The optimal strategy for POM given a (noisy) bit consists of encoding the first bit $x_1$ of the string $x$ in its extremal states $\delta_1$ and $\delta_2$ \cite[Lemma 1]{Spekkens2009}. 
    With probability $1/n$ Bob is asked to guess this specific bit, which he can do successfully with probability $(1- \alpha)$. 
    With probability $(n-1)/{n}$ Bob is asked to guess any of the remaining bits from $x$, for which he can do no better than a uniformly random guess, i.e.\ the success probability is $1/2$. 
    Hence, the optimal success probability is 
    \begin{equation}
        \mathfrak{p}_n(\Delta_2^{\alpha}) = \frac{1-\alpha}{n} + \frac{n-1}{2n} = \frac{ n + 1 - 2\alpha}{2n},
    \end{equation}
    which is strictly smaller than $\mathfrak{p}_n(\Delta_2) = (n+1)/2n$ whenever the noise parameter $\alpha$ is greater than $0$.
    This means that $\Delta_2 \preceq \Delta_2^{\alpha}$ does not imply $\mathfrak{p}_n(\Delta_2^{\alpha}) \geq \mathfrak{p}_n(\Delta_2)$. 
\end{proof}

We can see that the POM success probability cannot serve as a monotone because the POM game does not allow the use of classical systems for free.
This can be remedied by the so-called generalized yield construction \cite[Section 3.1]{gonda2023monotones}.

\begin{definition}
    \label{def:modified_pom}
    Given a GPT system $\A$ we define the optimal \newterm{POM success probability with free classical resources} as:
    \begin{align}
        \pomyield(\A) \coloneqq \sup_{d \in \Nl} \mathfrak{p}_n (\A \otimes \Delta_d).
    \end{align}
\end{definition}

\begin{proposition}
    $\pomyield$ is a monotone on the resource theory of GPT-contextuality.
\end{proposition}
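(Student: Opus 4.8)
The plan is to show directly that $\A \preceq \B$ implies $\pomyield(\A) \le \pomyield(\B)$, which is precisely the assertion that $\pomyield$ is order-preserving for the hierarchy of contextuality and hence a monotone of the resource theory. Since the POM success probability $\mathfrak{p}_n$ is a probability and therefore always lies in $[0,1]$, all the suprema appearing in \cref{def:modified_pom} are finite, so the inequality is well posed. By \cref{def:cont_preorder}, the hypothesis $\A \preceq \B$ furnishes a natural number $k$ and an exact univalent simulation (equivalently a $0$-embedding) of type $\A \hookrightarrow \B \otimes \Delta_k$.

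The first step is a bookkeeping lemma: for every $d \in \Nl$ one has $\A \otimes \Delta_d \hookrightarrow \B \otimes \Delta_{kd}$. To obtain this, I would tensor the given embedding with the identity on $\Delta_d$ — exactly as is done inside the proof of \cref{lem:clas_no_excess} — to get a $0$-embedding $\A \otimes \Delta_d \hookrightarrow (\B \otimes \Delta_k) \otimes \Delta_d$, and then rewrite the codomain as $\B \otimes \Delta_{kd}$ using associativity of the minimal composite together with the standard isomorphism $\Delta_k \otimes \Delta_d \cong \Delta_{kd}$ (legitimate because the composite of an arbitrary GPT system with a classical one is unique, cf. the discussion around \cref{prop:emb_reduction}). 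Composing embeddings is allowed since $\hookrightarrow$ is a preorder, so this yields $\A \otimes \Delta_d \hookrightarrow \B \otimes \Delta_{kd}$.

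The second step combines this with \cref{lem:embed_POM_prob}, the monotonicity of $\mathfrak{p}_n$ under embeddings. Applying that lemma to $\A \otimes \Delta_d \hookrightarrow \B \otimes \Delta_{kd}$ gives, for each fixed $d$,
\[
    \mathfrak{p}_n(\A \otimes \Delta_d) \;\le\; \mathfrak{p}_n(\B \otimes \Delta_{kd}) \;\le\; \sup_{d' \in \Nl} \mathfrak{p}_n(\B \otimes \Delta_{d'}) \;=\; \pomyield(\B).
\]
Taking the supremum over $d \in \Nl$ on the left-hand side then gives $\pomyield(\A) \le \pomyield(\B)$, as desired. I would also note that $\A \otimes \Delta_d$ remains finite-dimensional whenever $\A$ is, so $\pomyield$ is indeed well defined on the objects of the resource theory and the argument applies verbatim there (and extends unchanged to all of $\G$).

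I do not expect a genuine obstacle: this is an instance of the generalized yield construction of \cite[Section 3.1]{gonda2023monotones}, and the whole point is that passing from $\mathfrak{p}_n$ to $\pomyield$ builds in free access to classical systems, which is exactly what repaired the failure of $\mathfrak{p}_n$ to be a monotone. The only step requiring any care is the first one, namely the stability of $\hookrightarrow$ under tensoring with a classical system and under the isomorphisms $\Delta_k \otimes \Delta_d \cong \Delta_{kd}$; but since this is already carried out in the proof of \cref{lem:clas_no_excess}, it can be quoted rather than redone.
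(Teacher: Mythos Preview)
Your proposal is correct and follows essentially the same route as the paper: both tensor the given embedding $\A \hookrightarrow \B \otimes \Delta_k$ with $\Delta_d$, invoke \cref{lem:embed_POM_prob}, and then pass to the supremum. Your version is in fact slightly cleaner, since you bound $\mathfrak{p}_n(\B \otimes \Delta_{kd})$ directly by $\sup_{d'} \mathfrak{p}_n(\B \otimes \Delta_{d'})$ rather than first taking the supremum over the subsequence $\{kd : d \in \Nl\}$ and then arguing it equals the full supremum, as the paper does.
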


\begin{proof}
    Let us assume $\A \preceq \B$, which implies that there is an $m \in \Nl$ such that there is a univalent simulation $\A \otimes \Delta_{d} \to \B \otimes \Delta_{d \cdot m}$ for every $d \in \Nl$. 
    By Lemma~\ref{lem:embed_POM_prob}, we obtain $\mathfrak{p}_n(\A \otimes \Delta_d) \leq \mathfrak{p}_n(\B \otimes \Delta_{d \cdot m})$ for every $d \in \Nl$, which yields the  inequality below: 
    \begin{equation}
        \sup_{d \in \Nl}\mathfrak{p}_n(\A \otimes \Delta_d) \leq \sup_{d \in \Nl}\mathfrak{p}_n(\B \otimes \Delta_{d \cdot m}).
    \end{equation}
Moreover,
\begin{align}
      \sup_{d \in \Nl}\mathfrak{p}_n(\B \otimes \Delta_{d \cdot m})= \sup_{d \in \Nl}\mathfrak{p}_n(\B \otimes \Delta_{d}) , 
\end{align}
    which implies
    \begin{equation}
         \pomyield(\A) \leq \pomyield(\B) ,
    \end{equation}
     using \cref{def:modified_pom}.
\end{proof}

This proof is an instance of the proof of the generalized yield construction in \cite[Theorem 4.21]{gonda2021resource}, so that $\pomyield$ can be meaningfully interpreted as a (generalized) yield monotone.

\section{Discussion}\markboth{DISCUSSION}{}
\label{sec:discussion}

This section is divided into two subsections. In the first, we discuss whether our resource-theoretic perspective on contextuality could offer an avenue to understand contextuality as an expression of information erasure.
In the second, we discuss how our work relates to the main studies on which it builds, as well as to the resource theory of generalized contextuality developed in \cite{Duarte2018}.

\subsection{Contextuality and information erasure}
\label{sec:contextualityerasure}

The resource-theoretic perspective on contextuality \cite{Spekkens2009,Hameedi2017,Schmid2018,Saha2019,LostaglioSenno2020,lostaglio2020certifying,Yadavalli2020,Flatt2021,Roch2021,Duarte2018,plavala2022incompatibility} highlights the following aspect: the existence of a noncontextual model (i.e.\ noncontextuality) of a physical system imposes constraints on the type of information that this system can carry.
This is in essence why one can show that its violation (i.e.\ contextuality) can act as a resource for information-theoretic tasks.
Our resource theory of GPT-contextuality aims to give a more precise description of the nature and properties of this information-theoretic manifestation of contextuality as compared to previous works that often focus merely on the distinction between contextual and noncontextual systems.

From a more foundational perspective, a contextual ontological model gives a classical description of an operational theory that postulates ontological distinctions that are in principle inaccessible. 
There are two types of responses to the phenomenon of contextuality (e.g.\ that of quantum systems).
One is to reject ontological models as the appropriate framework to explain the operational features of the theory.
The other is to accept the possibility that a contextual ontological model provides a feasible realist explanation of the empirical predictions of the theory. 

The latter, however, begs for an explanation of the fine-tuning associated to it \cite{CataniLeifer2020}.
How can we understand that ontological distinctions disappear at the operational level?
One could explain the presence of fine-tuning implied by contextuality if there was a physical process responsible for this disappearance.
It would map the ontological theory to the operational theory, effectively erasing the extra information that is supposed to be inaccessible at the level of the operational theory.
We refer to this hypothetical process, associated to any ontological model, as the \emph{revelation map}, for short.

One could then ask:
can revelation maps explain the detailed information-theoretic properties of contextuality?
To tackle this question, we may take our resource theory of GPT-contextuality as an expression of the information-theoretic properties of contextuality.
In this sense, the question then becomes:
can the resource theory of GPT-contextuality be alternatively interpreted as a resource theory of erasure for revelation maps?
We cannot currently answer this question, since a resource theory of erasure for revelation maps has not yet been developed. 
Nevertheless, we believe the question is worthwhile and we outline a few relevant considerations below.

\paragraph{Explaining fine-tunings as emergent from yet undiscovered physical mechanisms.} 
The presence of fine-tunings in a contextual ontological model gives a conspiratorial connotation to the realist explanation of the theory it provides.
One attempt to justify the presence of such fine-tunings, by searching for new physical mechanisms, dates back to Valentini's variant of Bohmian mechanics \cite{Valentini1991}. 
There, he introduces a notion of quantum equilibrium as the reason why superluminal signaling does not manifest in quantum theory, despite the nonlocality of its underlying ontological model. 
This picture predicts that outside of the quantum equilibrium, it is possible to observe faster than light signaling.
    Therefore, the fine-tuned nature of no-signaling in Bohmian mechanics is explained just as an emergent feature of the quantum equilibrium and it is not universally valid. 
    We cannot avoid noticing how radical such explanations of fine-tunings rooted in undiscovered physical mechanisms are. 
    They imply that an established physical principle, such as the principle of no-signaling, is violated at the fundamental level.
    In the case of contextuality, 
    the physical mechanism explaining the emergence of the operational equivalences would entail the existence of measurements that can distinguish behaviours that are deemed indistinguishable by quantum theory.

\paragraph{Explaining contextuality through information erasure.}
    
    In Valentini's work, the quantum equilibration process is responsible for the emergence of no-signalling\,---\,the fine-tuned feature associated with nonlocality.
    What hypothetical physical mechanism could be responsible for the emergence of operational equivalences\,---\,the fine-tuned feature associated with contextuality? 
    
    It would have to be a process that involves a kind of information erasure.
    The information erased is the information about distinctions at the ontological (e.g.\ fundamental) level, which cannot be stored in systems of the operational (e.g.\ effective) theory that lacks these distinctions. 
    By Landauer's principle, we can then associate an increase in entropy between the fundamental and effective levels.
    Such a process of information erasure would not only provide an explanation for the problematic fine-tuning associated with contextuality but would also be associated to a potentially detectable heat dissipation. 
    This heat would signify that, indeed, there are distinctions at the fundamental level which are not present at the effective level.
   One could even hypothesise that the information erasure is a physical process occurring over time.
   That is, during the preparation of a quantum system there may be a timescale before which the system is described by the fundamental (and noncontextual) theory.
   At longer timescales, once the erasure has occurred, the system can only be described by the effective (contextual) theory.

\paragraph{Empirical nature of the fundamental theory.}

In the above hypothetical account of Bohmian mechanics, the theory of Bohmian particles is seen as a more fundamental one than quantum theory. 
    
    In order for a similar explanation of the fine-tuning associated with contextuality to be testable, the more fundamental theory must be in principle accessible.
    Only then would we expect entropy increase and heat transfer as a result of the erasure in the revelation map.
    This is in accordance with the view presented by Müller and Garner in \cite{MullerGarner2021}, but it is not in opposition to the standard use of ontological models to study contextuality \cite{Spekkens2005}.
    
    Indeed, if one tries to find evidence for the physical erasure process we hypothesize here and fails, this gives more credence to the common assumption that any ontological distinctions, which are not present in the operational theory, are in principle indistinguishable.
    On the other hand, if such evidence is found, the detailed study of contextuality in quantum theory provides invaluable clues towards finding the more fundamental theory, from which quantum theory emerges.

\paragraph{Ontological models as physical erasure processes.}
    In case the fundamental theory in question is classical, one is justified in using standard ontological models to describe how the effective theory is simulated by the fundamental one.
    Namely, an ontological model of (the operational theory associated to) a GPT system $\A$ (\cref{sec:op_th_for_gpt}) is equivalent to a simulation of $\A$ by a classical system $\Delta_n$. 
    The classical system $\Delta_n$ is interpreted as the fundamental system and $\A$ as the effective description thereof.
    A simulation $\A \to \Delta_n$ specifies all behaviours of the fundamental theory $\Delta_n$ that are compatible with a given behaviour of the effective theory $\A$.

    Whether we can interpret $\A$ as a description of $\Delta_n$ that emerges as a result of a revelation map from $\Delta_n$ to $\A$ depends on the kind of revelation maps that we consider.
    For example, if it supposed to be a pair of partial functions (one for states, one for effects) $f : \Delta_n \to \A$, then the associated simulation would be given by preimages under $f$.
   We could obtain an arbitrary simulation in this way.
    
    However, many simulations cannot be obtained like this if we further require the revelation map $f$ to be a \emph{physical GPT transformation} (\cref{def:GPT_transf}).
    Namely, we say that a simulation $(\Gamma, \Theta)$ of $\A$ by $\Delta_n$ admits of a \emph{physical realisation} if there exists a linear map $M : \Delta_n \to \A$, such that $\Gamma(\omega) \subseteq M^{-1}(\omega)$ for all $\omega \in \Omega_\A$ and $M^*(e) \in \Theta(e)$ for all $e \in E_\A$. See~\cref{app:phys_sim} for more details.
    
    For instance, while there is a noncontextual ontological model of the Spekkens toy bit~\cite{Spekkens2007} by two classical bits, i.e.\ a univalent simulation by $\Delta_4$, there cannot exist a physical realisation of this simulation, as we prove in~\Cref{app:phys_sim}.
    
    Nevertheless, there are ontological models with physical realisations.
    For any GPT system $\A$ with a finite set $\{\omega_i\}_{i=1}^n$ of convexly extremal states, we can define the Holevo-Beltrametti-Bugajski (HBB) model of $\A$ (\cref{ex:HBB}), which may be contextual. 
    In \cref{lem:holevo_map}, we show that every HBB model, viewed as a simulation of $\A$ by a classical GPT system $\Delta_n$, admits of a physical realisation $M : \Delta_n \to \A$.
    At the level of states, $M$ maps a given extremal state of the simplex $\Delta_n$ to the corresponding extremal state $\omega_i$ of $\Omega_\A$.

\paragraph{Erasure associated with ontological models.} 
    By analyzing physical GPT transformations like $M$ above, we can understand why information erasure in the context of revelation maps cannot be the same as the erasure in the context of thermodynamics. 
    While the latter can be understood as a coarse-graining operation, physical maps associated to contextual ontological models are not necessarily of this kind.
    
    To see this, consider a contextual ontological model of a GPT system $\A$, which is given by preimages under a physical GPT transformation $f : \Delta_d \to \A$. 
    Since $\A$ is contextual, there is no coarse-graining (i.e.\ idempotent) process on $\Delta_d$ whose image is $\A$ \cite[Lemma 11]{MullerGarner2021}. 
    Hence, the physical GPT transformation $f$ cannot be understood simply as a coarse-graining operation, such as the erasure that occurs between fundamental and effective levels in statistical mechanics.
    
    A more concrete perspective is offered by the HBB map $ M $ which explicitly models the preparation of the system $\A$ via the classical control $\Delta_n$. 
    The classical degree of freedom is used to select the extremal state of $\A$ to be prepared. 
    Thus, while $M$ can be used to model a physical process in the laboratory (e.g.\ the preparation of $\A$ conditioned on a classical variable), it does not correspond to a simple coarse-graining process applied to the classical degree of freedom.

\subsection{Relation with previous works on contextuality and GPTs}
\label{sec:previousworks}

In the following section we relate the resource theory of generalized contextuality introduced in this work to previous works on generalized contextuality in the frameworks of GPTs and contrast it to the existing proposed resource theory of contextuality of~\cite{Duarte2018}.

\paragraph{On the framework of M\"{u}ller and Garner.}
    On the technical level, our work makes use of the tools developed in \cite{MullerGarner2021}.
    For instance, the concept of univalent simulations of GPT systems plays an especially important role. 
    Even though we use similar tools, our goal is quite different. 
    M\"{u}ller and Garner use the concept of univalent simulations to formulate a notion of generalized noncontextuality that applies to generic effective and corresponding fundamental theories. 
    The effective theory is the theory emerging from the operational statistics of the experiment and the fundamental theory is a fine-grained theory of the effective ones (so, not just the simplicial theory, as in the case of standard noncontextuality).

    They use such noncontextuality as a plausibility criterion for testing the validity of the fundamental theory given the effective theory. 
    This approach becomes particularly meaningful when the fundamental theory is assumed to be quantum theory, since it allows the authors to provide experimental tests of quantum theory.
    Related works which employ an alternative definition of noncontextuality  are the ones of Gitton and Woods \cite{GittonWoods2020} and \cite{GittonWoods2022}.
    For ongoing debate on the merits of the different approaches see \cite{GittonWoods2022} and \cite{schmid_addressing_2024}.

\paragraph{On simplex embeddability.} 
    Other influential sources of inspiration for our work are \cite{Schmid2019,selby2023accessible,Selby2024}. As we already mentioned in the introduction, \cite{Schmid2019} introduces the notion of simplex embeddability as the geometrical criterion to assess whether a GPT system is noncontextual.\footnotemark{}\footnotetext{As a note our notion of the canonical ontological model associated to a GPT system differs from the notion of ontological model of a GPT system of~\cite{Schmid2019} since we define the preparations to be the set of ensembles of states, whereas they define the preparations to be the set of states (hence every ontological model of a GPT system is noncontextual in their work).}\footnotemark{} 
    \footnotetext{Another work that is closely related to \cite{Schmid2019} is that of Shahandeh \cite{shahandeh2021contextuality}. 
    However, the criterion of classicality introduced there additionally requires the simplex to have the same dimension of the GPT system. 
    We do not adopt this approach.
    For example, it deems the rebit stabilizer theory to be nonclassical. 
    However, the rebit stabilizer theory, in the prepare-and-measure scenario, admits of a noncontextual ontological model given by the Spekkens toy theory \cite{Spekkens2007}, as has been shown in several \mbox{works \cite{Spekkens2016,CataniBrowne2017,CataniBrowne2018}}.}%
    The criterion is extended to accessible GPT fragments in \cite{selby2023accessible}, where the latter correspond to more general mathematical objects than GPTs and characterize generic prepare-and-measure experimental setups. 
    The work of \cite{Selby2024} provides an algorithm for testing contextuality in any prepare-and-measure scenario. In particular, if it exists, it returns an explicit noncontextual model for the scenario and, if not, it provides the minimum amount of (depolarizing) noise which would be required until a noncontextual model would become possible. They call this measure of contextuality the \textit{robustness of nonclassicality}. The latter is related to our notion of error of univalent simulation, but it is different insofar as it requires a specific noise channel (e.g.\ depolarizing noise channel), whereas the error of univalent simulation does not require that. Moreover, the robustness of nonclassicality is not defined as a measure within a well defined resource theory. 

\paragraph{On accessible GPT fragments.}

It has been argued in \cite{selby2023accessible} that in realistic experiments GPT systems are generally not appropriate to study contextuality.
       In particular, the following discrepancy arises:
       A GPT system obtained from an operational description of the experiment provides states and effects observable \emph{in practice} in this experiment, but does not limit the \emph{in principle} implementable ones.
       However, the preparation equivalences preserved by a preparation-noncontextual ontological model (see \cref{sec:prep_cont}) are \emph{in principle} operational equivalences.
        
       When studying preparation contextuality only, this discrepancy can be dealt with.
       Namely, one can first set-up the effect space $E_\A$ in a non-operational way, as a hypothesized set of in principle possible effects.
       Importantly, one has to identify each actual measurement outcome arising in the experiment with a (distribution over) the elements of $E_\A$.
       Their means specify a subset $\tilde{E}_\A \subseteq E_\A$ of the so-called \emph{accessible} effects, which however \emph{play no role} in the study of preparation contextuality. 
       The state space $\Omega_\A$ is the set of equivalence classes of preparation procedures, but only with respect to in principle operational equivalence.
       That is, states that are indistinguishable by accessible effects, but could be distinguished by a hypothesized effect in $E_\A \setminus \tilde{E}_\A$ ought not be identified in $\Omega_\A$.
       With this set-up, one can study preparation contextuality as we discuss shortly in \cref{sec:GPT_simulations} in terms of the GPT system $(\Omega_\A, E_\A)$.

       Such a treatment is not possible when studying both preparation and measurement contextuality simultaneously.
       One has to replace the notion of a GPT system by the notion of an \emph{accessible fragment} of a GPT system introduced in \cite{selby2023accessible}.
       Simply put, this is a pair of a GPT system $(\Omega_\A, E_\A)$ together with a subsystem $(\tilde{\Omega}_\A, \tilde{E}_\A)$ thereof, which importantly can violate the non-degeneracy condition \eqref{eq:non-degenerate}, and therefore are not necessarily GPT system.
       The enveloping system describes states and effects that can arise in principle, while the subsystem describe those that are achievable in the particular experiment, whose contextuality we are interested in.
       
       While it would be arguably more appropriate to develop our hierarchy of contextuality for accessible fragments instead of plain GPT systems, we chose against this option for two reasons.
       Firstly, this would make the approach superficially more complicated, even though all the key ideas can be presented in the simpler framework of GPT systems.
       Secondly, one can easily accommodate our approach to the case of accessible fragments.
       The only difference is that the preparation (and outcome) equivalences have to be adjusted.
       For example, two preparations (i.e.\ distributions over $\tilde{\Omega}_\A$) would be deemed equivalent if their means cannot be distinguished by any effect in $E_\A$. 

\paragraph{On the resource theory of contextuality due to Duarte and Amaral.} %
Another related work is~\cite{Duarte2018}, where the authors also provide a resource theory of generalized contextuality in prepare-and-measure scenarios using a `black box' framework. In~\cite{Duarte2018} the basic object consists of a \emph{prepare-and-measure scenario} $\mS$ together with a \emph{behaviour} $B$. A scenario is a set $\mS = \{\P,\M,\mD,\E_P,\E_M\}$ where $\P$ is a set of preparations, $\M$ a set of measurements, $\mD$ a set of outcomes, $\E_P$ the operational equivalences for preparations and $\E_M$ the operational equivalences for measurements. 

A behaviour $B$ is a set of conditional probabilities:
\begin{align}
    B = \{p(k|j,i)\}_{i \in \mI, j \in \mJ, k \in \mK}
\end{align}
corresponding to the probability of outcome $d_k$ given preparation $P_i$ and measurement $M_j$. $\mI, \mJ$ and $\mK$ are sets of labels for the preparations, measurements and outcomes, respectively.

We note that the sets of operational equivalences $\E_P$ and $\E_M$ of a given object are defined independently of the probability distributions. As such an object (scenario + behaviour) may contain operational equivalences which are not implied by the behaviour. Alternatively, there may be operational equivalences present in the behaviour which are not included in the scenario.
In the cases where all ensembles of procedures are valid procedures in the scenario, and where the operational equivalences  implied by behaviour are those of the scenario, then the object defined is equivalent to a GPT system.

The free objects of~\cite{Duarte2018} are the noncontextual behaviours, namely those which admit of a generalized noncontextual model.

The operations of \cite{Duarte2018} are given by pre- and post-processing on the preparations, measurements and outcomes and correspond to channel simulations. The free operations correspond to channel simulations such that all the operational equivalences of the simulated system are images of operational equivalences of the simulating system through the simulation map.

Let us consider an example where both resource theories can be applied to highlight the differences between the resource theory of~\cite{Duarte2018} and the one presented in this work.

We consider a scenario $\mS = \{\P,\M,\mD,\E_P,\E_M\}$ and a scenario $\tilde \mS = \{ \tilde \P,\tilde \M, \tilde \mD,\tilde \E_P,\tilde \E_M\}$ where,
\begin{alignat}{5}
    &\P = \{0,1\} , \quad &&\M = \{0\}, \quad  &&\mD = \{0,1\} , \ &&\E_P = \emptyset , \ &\E_M = \emptyset , \\
     &\P = \{0,1,2\} , \quad &&\M = \{0\}, \quad &&\mD = \{0,1,2\} , \quad &&\E_P = \emptyset , \quad &\E_M = \emptyset .
\end{alignat}
The behaviours on $\mS$ and $\tilde \mS$ are $B$ and $\tilde B$ respectively:
\begin{align}\label{eq:bit_trit}
    p(k|j = 1,i) &= \delta_{i,k} , \ i, k \in \{0,1\}, \\
    p(\tilde k|\tilde j = 1, \tilde i) &= \delta_{\tilde i, \tilde k}, \ \tilde i, \tilde k \in \{0,1,2\}.
\end{align}
These are the behaviours obtained by preparing a bit (resp. trit) in one of its extremal states and then implementing the canonical two outcome (resp. three outcome) measurement which returns one of the outcomes with certainty. 

In the framework of GPTs $(\mS, B)$ corresponds to a bit and $(\tilde \mS, \tilde B)$ to a trit (assuming one also allows all ensembles of preparations and measurements). 

In the resource theory of GPT contextuality presented in this work we have that the bit and the trit are equivalent as resources: $\Delta_2 \succeq \Delta_3$ and $\Delta_3 \succeq \Delta_2$.

However in the resource theory of~\cite{Duarte2018} this is not the case. We now show explicitly that there is no free operation $T$ taking $(\mS,B)$ to $(\tilde \mS, \tilde B)$. Since there is only one measurement in both scenarios we drop the indices $j$ and $\tilde j$. Let us assume the existence of stochastic maps $q_O(\tilde k|k)$ and $q_P(i|\tilde i)$ which map $B \to \tilde B$:
\begin{align}
    p(\tilde k|\tilde i) = \sum_{i,k = 0}^1 q_O(\tilde k|k) p(k|i) q_P(i|\tilde i) 
\end{align}
Now using the definitions of $B$ and $\tilde B$ of~\cref{eq:bit_trit} we obtain:
\begin{align}
    \delta_{\tilde i, \tilde k} = \sum_{i = 0}^1 q_O(\tilde k|i) q_P(i|\tilde i)  . 
\end{align}
Let us denote by $Q_O$ the $3 \times 2$ matrix with entries $[Q_O]_{\tilde k, i} = q_O(\tilde k|i) $ and $Q_P$ the $2 \times 3$ matrix with entries $[Q_P]_{i, \tilde i} = q_P(i|\tilde i)$. Then the previous equation is equivalent to the matrix equation:
\begin{align}
    \I_3 = Q_O \cdot Q_P
\end{align}
This yields a contradiction since the decomposition of a $3 \times 3$ matrix $A$ into matrix product  $A = B \cdot C$ with $B$ a $3 \times 2$ matrix and $C$  a $2 \times 3$ matrix implies  $C$ has rank at most 2. However $\I_3$ is a rank 3 matrix. This shows that there is no free operation  $(\mS, B) \mapsto (\tilde \mS, \tilde B)$ in the resource theory of~\cite{Duarte2018}.
Notice that the reason why in our resource theory the bit and the trit{\,---\,}and any other noncontextual theory{\,---\,}are equivalent is that our free operations include the access to the classical system. This is the origin of the difference with the resource theory of~\cite{Duarte2018}.

An additional contribution of the present work is the following. In~\cite{Duarte2018} the authors define a notion of composite of two behaviours which is equivalent to the minimal composite of~\cref{def:minimal_tensor} in our framework. They then show that this composite is consistent with their resource theory, in the sense that the free objects are closed under this composite. In the present work we show that any consistent composite of GPT systems, not just the minimal composite of~\cref{def:minimal_tensor} is compatible with the resource theory of contextuality.

\section{Conclusion} \markboth{CONCLUSION}{}
\label{sec:conclusion} 

Generalized contextuality, a leading notion of nonclassicality, is of crucial importance both in the foundations of quantum theory and in quantum information processing. Despite this, there is no complete characterisation of generalized contextuality as a resource. In this work we address this shortcoming.

Based on recent developments formulating noncontextuality of GPT systems via simplex embeddability, we have defined a resource theory of contextuality of GPT systems in prepare-and-measure scenarios. The free resources are the noncontextual systems and the free operations are univalent simulations with free access to classical systems. Using these notions, we motivate a hierarchy of contextuality for GPT systems. 
A new contextuality monotone arises naturally from our considerations{\,---\,}the classical excess, which expresses the minimum error of a univalent simulation by the countably-infinite classical system.  
We have also shown how a standard witness of contextuality, in the form of the POM success probability, can be used to define a contextuality monotone. 

A key feature of the resource theory presented in this work is that the free operations include access to classical systems. Since every noncontextual system can be univalently simulated by a classical system this entails that all noncontextual systems are equivalent in the hierarchy. 
Without access to free classical systems the preorder which emerges is the embeddability preorder, for which generic noncontextual systems are inequivalent. 
Moreover we show, by assuming that the only way of composing GPT systems with classical ones is the standard one, that  any consistent composition of GPT systems is consistent with the hierarchy: if $\A_1 \preceq \B_1$  and $ \A_2 \preceq  \B_2$ then $\A_1 \tildetensor \A_2 \preceq  \B_1 \tildetensor \B_2$.

We have also discussed how GPT simulations could describe a physical process of information erasure that would explain the fine-tuning associated with contextuality\footnotemark{} in a similar way to how the quantum equilibration process proposed by Valentini explains the fine-tuning associated with nonlocality in Bohmian mechanics. 
\footnotetext{A recent work that connects preparation contextuality and information erasure is \cite{Montina2023}, where the authors show that any ontological model reproducing the statistics of a sequential protocol involving incompatible projective measurements involves more information erasure than what operational quantum theory predicts. 
This fact is strictly related to the presence of preparation contextuality, as the same final quantum state in the protocol is represented by two different ontic distributions.  
This result can be seen as an example of fine-tuning of information erasure (i.e.\ more erasure at the ontological level than at the operational level), which, despite being interesting, differs from our idea of viewing contextuality as arising from a process of information erasure that explains the operational equivalences of distinct ontological representations.}%
We have argued that the information erasure in this case would be different than a simple coarse-graining. 
An interesting avenue for future research would be to further characterize this kind of information erasure and propose a test to detect heat dissipation in experiments manifesting contextuality. 
Even though such a proposal is undoubtedly radical, we believe that it could  explain the fine-tuning associated with contextuality without abandoning the ontological models framework.

Another direction for further investigation is to develop a solid interpretation and establish the use-cases of the resource theory of GPT-contextuality that we provide. 
It is known that a resource theory of a given notion need not be unique. We have examples of this fact, like the resource theories of entanglement based on LOCC~\cite{Bennett1996} and LOSR~\cite{Buscemi2012,SchmidLOSR2023} free operations, respectively. 
This does not mean that only one of them is \qmarks{the correct resource theory of entanglement}.
Rather, these resource theories may be applicable in different contexts or for different purposes. 
In the case of entanglement, one can say that LOCC operations are relevant for communication tasks in the context of quantum internet and LOSR operations are relevant for the study of entanglement in Bell scenarios. 

In this respect, we have discussed the relation of our work with other studies of generalized contextuality for GPTs in section \ref{sec:previousworks}. One that we did not mention is \cite{Tavakoli2021}. It would be interesting to find a relationship between our classical excess and the simulation cost of contextuality defined therein. 
Finally, it would be also interesting to extend the methods developed here to the realm of resource theories of more general fine-tunings \cite{CataniLeifer2020}, such as violations of time symmetry \cite{LeiferPusey} and bounded ontological distinctness \cite{Chaturvedi2020}.

\section*{Acknowledgments}

    The authors thank the participants of the PIMan workshop -- Orange (CA), March 2019 -- where the idea of this project originated. In particular, Luke Burns and Justin Dressel, who were part of the initial discussions on the project. 
    The authors further thank Rafael Wagner for insightful explanations regarding the approach of Duarte and Amaral \cite{Duarte2018}.
    LC thanks Farid Shahandeh and TGa thanks Markus M{\"u}ller for helpful discussions.
This project started when LC was supported by the Fetzer Franklin Fund of the John E. Fetzer Memorial Trust and by the Army Research Office (ARO) (Grant No. W911NF-18-1-0178). 
LC also acknowledges funding from the Einstein Research Unit ``Perspectives of a Quantum Digital Transformation'' and from the Horizon Europe project FoQaCiA, GA no.101070558.  
TGa acknowledges support from the Austrian Science Fund (FWF) via project P 33730-N. This research was partly funded by the Austrian Science Fund (FWF)
10.55776/PAT2839723. TGo acknowledges support from the Austrian Science Fund. 
This research was funded in whole or in part by the Austrian Science Fund (FWF) via the START Prize Y1261-N. 
This research was supported in part by Perimeter Institute for Theoretical Physics. 
Research at Perimeter Institute is supported by the Government of Canada through the Department of Innovation, Science, and Economic Development, and by the Province of Ontario through the Ministry of Colleges and Universities. 

For open access purposes, the authors have applied a CC BY public copyright license to any accepted manuscript version arising from this submission.

\bibliography{Bibliography}

\appendix

\section{On the physicality of simulations}\label{app:phys_sim}

In the framework of GPTs a physical map $\B \to \A$ is given by a linear map $M: V_\B \to V_\A$ satisfying $M(\Omega_\B) \subseteq \Omega_\A$. $M$ can be viewed as part of a preparation procedure, preparing a state of $\A$ by first preparing a state $\omega_\B$ of $\B$ and then applying $M$ to obtain $\omega_\A$ given by $M (\omega_\B)$. 
Composing this with a measurement $\{e_\A^i\}_i$ on $\A$ gives outcome probabilities $e_\A^i \cdot M (\omega_\B)$. 
However, one could also view this experiment as preparing the state $\omega_\B$ on system $\B$ and then measuring $\{e_\A^i \circ M\}_i = \{M^* e_\A^i\}_i$ which is a measurement on $\B$, hence we require that a physical map also obeys $M^* (e_\A) \in E_\B$ for all $e_\A \in E_\A$. 

\begin{definition}[Physical GPT transformation]\label{def:GPT_transf}
    A physical GPT transformation $M$ between two GPT systems, from ${\B = (\Omega_\B, E_\B, V_\B)}$ to $\A = (\Omega_\A,E_\A, V_\A)$, is given by a linear map $M : V_\B \to V_\A$ satisfying
    \begin{equation}\label{eq:GPT_transf}
        M(\Omega_\B) \subseteq \Omega_\A \qquad \text{and} \qquad  M^*(E_\A) \subseteq E_\B .
    \end{equation}
\end{definition}

One may wonder whether an exact simulation $(\Gamma,\Theta)$ of a system $\A$ by a system $\B$ can correspond to a physical transformation $M : \B \to \A$,\footnotemark{}
in the sense that the state simulation $\Gamma$ assigns, to each state $\omega \in \Omega_\A$, a subset of those states of $\B$ that could give rise to $\omega$ when $M$ is applied.\footnotetext{There is another possibility: namely, the simulation may correspond to a physical transformation $W : \A \to \B$, in the sense that the state simulation $\Gamma$ assigns, to each state $\omega \in \Omega_\A$, the set of those states of $\B$ that could give rise to $\omega$ when $M$ is applied.
We do not treat this case here.}

\begin{definition}[Physical realisation of a simulation]\label{def:phys_realisation}
    Consider a simulation $(\Gamma, \Theta)$ of a GPT system $\A$ by a GPT system $\B$ and a physical GPT transformation $M : \B \to \A$.
    We say that $M$ is a \newterm{physical realisation} of $(\Gamma, \Theta)$ iff we have $\Gamma(\omega) \subseteq M^{-1}(\omega)$ for all $\omega \in \Omega_\A$ and $\Theta(e) \supseteq \{M^*(e)\}$ for all $e \in E_\A$.\footnotemark{}
    \footnotetext{Here, $M^{-1}(\omega)$ refers to the preimage of $\omega$ under $M$ and $\{M^*(e)\}$ refers to the singleton set whose element is $M^*(e)$.}%
\end{definition}
For example, a classical GPT system $\Delta_n$ is embeddable within an $n$-level quantum GPT system $\Q_n$.
To describe the univalent simulation, consider a basis of the latter's Hilbert space, whose elements are denoted by $\ket{i}$ for $i$ ranging from $1$ to $n$.
We can let the state simulation map each extremal point $\delta_i$ of the classical state space to the quantum state $\ketbra{i}$.
The effect simulation is then uniquely determined, it must send each $\delta_i^*$ to the effect $\rho \mapsto \bra{i} \rho \ket i$.
There is a physical realisation of this simulation given by the completely dephasing map $\Q_n \to \Delta_n$ which acts as
\begin{equation}\label{eq:dephasing_map}
    \rho \mapsto \sum_{i= 1}^n \bra{i} \rho \ket i \delta_i.
\end{equation}

Requiring a simulation $(\Gamma, \Theta)$ of a GPT system $\A$ by a system $\B$ to admit of a physical realisation is a very strong condition. 
It has the following consequences.

\begin{proposition}\label{lem:univalent_non-phys}
    Consider a simulation $(\Gamma,\Theta)$ of a system $\A$ by a system $\B$ with a physical realisation $M: \B \to \A$.
    \begin{enumerate}
        \item $M$ is surjective on states, i.e.\ we have $M(\Omega_\B) = \Omega_\A$.
        \item If $\dim(V_A) = \dim(V_B)$ holds, then $(\Gamma,\Theta)$ is a univalent simulation and $M$ is an isomorphism.
    \end{enumerate}
\end{proposition}

\begin{proof}
    By \cref{def:phys_realisation}, we have $\Gamma(\omega) \subseteq M^{-1}(\omega)$ for all $\omega \in \Omega_\A$.
    In particular, the preimage of every such $\omega$ under $M$ must be non-empty, which means $M(\Omega_\B) \supseteq \Omega_\A$.
    Combining this with the first condition in \eqref{eq:GPT_transf} gives $M(\Omega_\B) = \Omega_\A$.

    Let us now assume that the underlying vector spaces of $\A$ and $\B$ have the same dimension given by $d \in \mathbb{N}$.
    We prove that $\Gamma$ is single-valued by contradiction.
    To this end, assume that there are two distinct $\gamma_1, \gamma_2 \in \Gamma(\omega)$ for some $\omega \in \Omega_\A$.
    By empirical adequacy of the simulation, we must have 
    \begin{equation}
        f \cdot \gamma_1 = f \cdot \gamma_2 
    \end{equation}
    for all $f \in \Theta(E_\A)$.
    In particular, this means that the span of $\Theta(E_\A)$ has dimension strictly less than $d$ and no effect in $\Theta(E_\A)$ can distinguish two states of $\Omega_\B$ that differ by a scalar multiple of $\gamma_1 - \gamma_2$.

    By the fact that effects of $\A$ can be distinguished by its states, and the fact that $\Omega_\A$ is a convex set (not containing the zero vector), there are $d$ states $\rho_i \in \Omega_\A$ whose span is the whole of $V_\A$.
    To distinguish these $d$ states, one needs at least $d$ linearly independent effects, let us denote a choice of these by $e_i \in E_\A$.
    By empirical adequacy, any choice of $d$ states $\sigma_i \in \Gamma(\rho_i)$ of $\B$ must also span $V_\B$ and be distinguishable by effects in
    \begin{equation}
        \Theta\left(\{e_i\}_{i=1}^d\right) \subseteq \Theta(E_\A).
    \end{equation}
    This is now a contradiction because $\Theta(E_\A)$, as we established earlier, has dimension strictly less than $d$.
    In conclusion, $\Gamma$ must be single-valued.
    An analogous argument can be used to show that $\Theta$ must also be single-valued.

    By the univalence of the simulation, we can thus think of $\Gamma$ and $\Theta$ as convex-linear maps $\Omega_\A \to \Omega_\B$ and $E_\A \to E_\B$ respectively. 
    Since $\spann(\Omega_\A) = V_\A$ and $0 \not\in \Omega_\A$ (by the second condition in \eqref{eq:zero_unit_effects}) it follows that we can actually identify the state simulation with a linear map $V_\A \to V_\B$ denoted by $\bar \Gamma$.
   
    Since $\Gamma$ is injective by Implication \eqref{eq:image_nonoverlap} (which implies that $\bar \Gamma$ is as well) and we have $V_\A \cong V_\B$ by assumption, $\bar \Gamma$ is an isomorphism. 
   
     By \cref{def:phys_realisation}, we then have $M \Gamma(\omega) = \omega$ for all $\omega \in \Omega_\A$. 
     By linearity, we can extend this equation to the span of $\Omega_\A$ to get $M \bar \Gamma = \id_{V_\A}$, which means that $M$ is the inverse of $\bar \Gamma$ since the latter is an isomorphism.
     Thus, $M$ itself must be an isomorphism.
\end{proof}

    In the case of unequal dimensions of $V_\A$ and $V_\B$, a simulation can indeed be realised also by a physical GPT transformation that is not an isomorphism. 
    One example is the simulation of the n-level classical system $\Delta_n$ by the quantum one $\Q_n$ mentioned above.
    The completely dephasing map from \eqref{eq:dephasing_map} gives a realisation thereof.

    One of the consequences of \cref{lem:univalent_non-phys} is that whenever there exists a simulation of a GPT system $\A$ by $\B$ and $\dim(V_\A) = \dim(V_\B)$ holds, the state spaces of $\A$ and $\B$ must be isomorphic as convex sets.
    However, the same is not true for the effect spaces, as the following example shows.

\begin{example}[Physical realisation of univalent simulation for inequivalent effect spaces]
    Consider two systems $\A$ and $\bar \A$ with identical underlying vector spaces, where we also have $\Omega_\A = \Omega_{\bar A}$ and $E_{\bar A} \subsetneq E_\A$. 
    Then the inclusion maps on states and effects provide a univalent simulation of $\bar \A$ by $\A$. 
    Moreover, it is realisable by the physical GPT transformation given by the identity ${\Id : V_\A \to V_{\bar \A}}$, since we have $\Id(\Omega_\A) = \Omega_{\bar \A}$ and $\Id^*(E_{\bar \A}) \subseteq E_\A$. 
    Even though the identity is invertible as a linear map, its inverse $\Id : V_{\bar \A} \to V_\A$ is not a physical GPT transformation since $\Id^*(E_\A) \subseteq E_{\bar \A}$ is false by assumption.
\end{example}

We can apply \cref{lem:univalent_non-phys} to the (convexified) Spekkens toy bit $\mathsf{T}_2$ (with six extremal states), which has a non-contextual ontological model, i.e.\ it can be simulated by a classical GPT system, in particular by $\Delta_4$ with four extremal states. 
Since these two GPT systems have isomorphic underlying vector spaces and non-isomorphic state spaces, it follows that there can be no physical realisation $\Delta_4 \to \mathsf{T}_2$ of the univalent simulation of $\mathsf{T}_2$ by $\Delta_4$.

Interestingly, for any GPT system $\A$ with finitely many extremal states there is an ontological model thereof, which admits of a physical realisation.

\begin{lemma}\label{lem:holevo_map}
    The HBB simulation (see~\cref{ex:HBB}) of a GPT system $\A$ with $n \in \N$ extremal states by the classical system $\Delta_{n}$ has a physical realisation.
\end{lemma}

\begin{proof}
    Given a GPT system $\A$ with a finite set $\{\omega_i\}_{i=1}^n$ of extremal states, the HBB model is given by the maps $(\Gamma, \Theta)$ as defined in~\cref{ex:HBB}. 
    We first show that the opposite  ${\Gamma^{-1}: \Gamma(\Omega_\A) \to \Omega_\A}$ of $\Gamma$ (which is a function as shown in \eqref{eq:image_nonoverlap}) uniquely extends to a linear map $M: \R^{n} \to V_\A$.

    The map $\Gamma: \Omega_\A \to \Delta_{n}$ is a multivalued function defined as:
    \begin{equation}
        \Gamma(\omega) = \Set{\sum_{i=1}^n p_i \delta_i \given p_i \in [0,1], \, \sum_{i=1}^n p_i = 1, \, \omega = \sum_{i=1}^n p_i \omega_i} .
    \end{equation}
    For an arbitrary element $\delta \in \Delta_n$, which can be uniquely decomposed as
    \begin{equation}
        \delta =  \sum_i p_i \delta_i ,
    \end{equation}
    the function $\Gamma^{-1}$ is given by
    \begin{equation}
        \Gamma^{-1} \left( \delta \right) = \sum_i p_i \omega_i 
    \end{equation}
    and is in particular a convex-linear map $\Delta_n \to \Omega_\A$.
    
    Since $\Delta_{n}$ spans $\R^{n}$ and does not contain the origin, the map $\Gamma^{-1}$ uniquely extends to a linear map $M: \R^{n} \to V_\A$ satisfying $M(\Delta_{n}) = \Omega_\A$ and $\Gamma (\omega) \subseteq M^{-1}(\omega)$ for all $\omega \in \Omega_\A$.
    
    This map also satisfies $M^*(E_\A) \subseteq \Delta_n^*$, since for any $e \in E_\A$ we have
    \begin{equation}
        \forall \delta \in \Delta_n \quad \quad M^*(e) \cdot \delta = e \cdot M (\delta) \in [0,1],
    \end{equation}
    so that $M^*(e)$ is an element of $\Delta_n^*$. 
    Therefore, $M$ is a physical GPT transformation.
    
    The univalent effect simulation map is given (as a function) by $\Theta(e) = \sum_i (e \cdot \omega_i) \delta_i^*$. 
    We now have, for any ${j \in \{1,2,\ldots,n\}}$,
    \begin{equation}
        M^*(e) \cdot \delta_j = e \cdot M(\delta_j) = e \cdot \omega_j = \sum_i (e \cdot \omega_i) (\delta_i^* \cdot \delta_j) = \Theta(e) \cdot \delta_j ,
    \end{equation}
    so that $M^*(e_\B) = \Theta(e_\B)$ holds by \eqref{eq:non-degenerate_2}.
    In conclusion, this shows that $M$ is indeed a physical realisation of the HBB simulation.
\end{proof}

\end{document}